\DeclarePairedDelimiter{\ceil}{\lceil}{\rceil}
\DeclarePairedDelimiter{\floor}{\lfloor}{\rfloor}
\newtheorem{observation}{\textbf{Observation}}
\begin{document}

\title{Improved approximation algorithm for Fault-Tolerant Facility Placement}

\author{Bartosz Rybicki\thanks{Research supported by NCN 2012/07/N/ST6/03068 grant} \and Jaroslaw Byrka\\ \{bry, jby\}@ii.uni.wroc.pl}

\institute{Institute of Computer Science\\ University of Wrocław, Poland}

\maketitle

\begin{abstract}
We consider the Fault-Tolerant Facility Placement problem ($FTFP$), which is a generalization of the classical Uncapacitated Facility Location problem ($UFL$). In the $FTFP$ problem we have a set of clients $C$ and a set of facilities $F$. Each facility $i \in F$ can be opened many times. For each opening of facility $i$ we pay $f_i \geq 0$. Our goal is to connect each client $j \in C$ with $r_j \geq 1$ open facilities in a way that minimizes the total cost of open facilities and established connections. 

In a series of recent papers $FTFP$ was essentially reduced to $FTFL$ and then to $UFL$ showing it could be approximated with ratio $1.575$. In this paper we show that $FTFP$ can actually be approximated even better. We consider approximation ratio as a function of $r = min_{j \in C}~r_j$ (minimum requirement of a client). With increasing $r$ the approximation ratio of our algorithm $\lambda_r$ converges to one. Furthermore, for $r > 1$ the value of $\lambda_r$ is less than 1.463 (hardness of approximation of $UFL$). We also show a lower bound of 1.278 for the approximability of the Fault-Tolerant Facility Location problem ($FTFL$) for arbitrary $r$. Already for $r > 3$ we obtain that $FTFP$ can be approximated with ratio 1.275, showing that under standard complexity theoretic assumptions $FTFP$ is strictly better approximable than $FTFL$. 
\end{abstract}

\newpage

\section{Introduction}
In the Fault-Tolerant Facility Placement problem, we are given a set $F$ of locations where facilities may be opened (each $i \in F$ costs $f_i > 0$ and can be opened many times) and a set $C$ of clients. Each $j \in C$ has connection requirement $r_j > 0$. Our goal is to open a subset of facilities (possibly many copies of some facilities) and connect each client $j$ with $r_j$ open facilities, such that the total cost of connections and opened facilities is as small as possible. In this paper we consider the metric version of the problem where the connection costs satisfy the triangle inequality.

It is easy to see that the classical $UFL$ problem is a special case of $FTFP$ with all $r_j = 1$ . On the other hand, if no facility can be open more than once, then the problem becomes the Fault-Tolerant Facility Location problem (FTFL), in which the demands cannot exceed the number of facilities. 

Facility location problems are typically APX-hard and there exist constant factor approximation algorithms assuming metric connection costs. Shmoys, Tardos and Aardal \cite{Tardos} gave the first constant factor $3.16$-approximation algorithm based on LP-rounding. Later Chudak and Shmoys \cite{Chudak} obtained $(1 + \frac{2}{e})$-approximation by marginal-preserving randomized  rounding of facility openings, which has became standard for facility location problems. The long line of results for UFL includes a primal-dual algorithm JMS \cite{Jain}, which was then combined with a scaled version of \cite{Chudak} in a work of Byrka and Aardal \cite{Aardal}. The currently best known ration of 1.488 was obtained by Shi Li \cite{ShiLi} by further randomizing the algorithm from \cite{Aardal}. Many techniques developed for $UFL$ can be directly applied to $FTFP$ which was shown in \cite{Yan}.

First constant factor approximation algorithm for the closely related $FTFL$ problem was given by Guha, Meyerson and Munagala \cite{Meyerson}. Next Swamy and Shmoys  improved the ratio to 2.076, see~\cite{Swamy}. More recently Byrka, Srinivasan and Swamy~\cite{ftfl_1725} improved the ratio to 1.725 using dependent rounding~\cite{Aravind} and laminar clustering. Moreover it is shown in~\cite{Swamy} that JMS algorithm can be adapted to $FTFL$ with uniform requirements of clients.

$FTFP$ was first studied by Xu and Shen \cite{Xu} and next by Yan and Chrobak who first obtained a $3.16$-approximation algorithm~\cite{Yan_316}, and later improved the ratio to $1.575$~\cite{Yan}.

\subsection{Our contribution}

We extend the work of Yan and Chrobak~\cite{Yan} and propose an algorithm with approximation ratio being a decreasing function of the minimal requirement $r = min_{j \in C}~r_j$. Our solution benefits from requirements of clients being bigger than one. Instead of considering a client $j \in C$ as $r_j$ distinct clients with unit demand we raven on this multiplicity and use Poisson distribution to estimate the expected number of useful facilities which will be open in a set of a particular volume. We consider both cases: uniform and non-uniform requirements of clients, and obtain the following approximation ratios: 

\begin{center}
  \begin{tabular}{ c | c | c | c | c | c | c | c | c | c | c }
    $r$ & 1 & 2 & 3 & 4 & 5 & 6 & 7 & 8 & 9 & 10 \\ \hline
    non-uniform & 1.515 & 1.439 & 1.338 & 1.275 & 1.234 & 1.207 & 1.187 & 1.171 & 1.159 & 1.149 \\ \hline
    uniform & 1.488 & 1.410 & 1.329 & 1.272 & 1.234 & 1.207 & 1.187 & 1.171 & 1.159 & 1.149 \\
  \end{tabular}
\end{center}

We also prove a lower bound of 1.278 on the approximability of Fault-Tolerant Facility Location (where at most one facility may be opened in each location) for arbitrarily large $r > 1$. 

\begin{observation}
 Lower bound for $FTFL$, of value $1.278$, is bigger than $\lambda_r$ for $r \geq 4$. Moreover for $r \geq 2$ $FTFP$ is easier than $UFL$
\end{observation}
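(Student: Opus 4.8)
The plan is to obtain the observation as an immediate corollary of the two quantitative results of the paper: the $\lambda_r$-approximation algorithm for $FTFP$ whose guarantees are tabulated above, and the $1.278$ inapproximability bound for $FTFL$ that is shown to hold for instances with arbitrarily large minimum requirement. First I would isolate the only structural fact that is actually used, namely that the tabulated ratios are non-increasing in $r$; equivalently, an algorithm guaranteeing ratio $\lambda_r$ on every instance with $\min_{j} r_j \ge r$ in particular guarantees ratio $\lambda_4$ on every instance with $\min_j r_j \ge 4$. From the (non-uniform) row of the table $\lambda_4 = 1.275$, hence $\lambda_r \le 1.275$ for all $r \ge 4$.

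Next I would compare this with the $FTFL$ lower bound. Under standard complexity-theoretic assumptions no polynomial-time algorithm approximates $FTFL$ within $1.278 - \varepsilon$, and the hard instances can be taken to have $\min_j r_j = r$ for any prescribed $r$. Restricting attention to the class of instances with $\min_j r_j \ge 4$, the $FTFP$ algorithm achieves ratio at most $1.275 < 1.278$, whereas on the same class $FTFL$ is not approximable below $1.278$. Since $FTFP$ and $FTFL$ differ only in whether a location may be opened several times, this yields the claimed strict separation: on these instances $FTFP$ is provably easier to approximate than $FTFL$.

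For the second assertion I would simply recall that the inapproximability threshold for $UFL$ is $1.463$, while for $r \ge 2$ the algorithm gives $\lambda_r \le \lambda_2 = 1.439 < 1.463$ (again using monotonicity of the tabulated ratios). Thus the sub-family of $FTFP$ instances in which every client requires at least two facilities admits a polynomial-time approximation strictly better than what is possible for general $UFL$, which is the intended meaning of "$FTFP$ is easier than $UFL$ for $r \ge 2$".

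I expect essentially no obstacle: the observation is a bookkeeping corollary of the theorem and the lower-bound construction. The only point needing a line of justification is that the guarantees in the table are genuinely monotone in $r$, so that the hypotheses "$r \ge 4$" and "$r \ge 2$" may replace the exact values $r = 4$ and $r = 2$; this follows from the structure of the analysis, since a larger minimum requirement only increases the number of demands over which the Poisson estimate of the number of useful open facilities concentrates and therefore cannot worsen the bound.
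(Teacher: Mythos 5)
Your proposal is correct and matches the paper's (implicit) justification: the observation is read off directly from the tabulated values $\lambda_4 \le 1.275 < 1.278$ and $\lambda_2 = 1.439 < 1.463$, combined with Theorem~\ref{hardness_of_FTFL} and the fact that $\lambda_r$ is non-increasing in $r$ (Section~\ref{r_vs_apx}). The only thing you add beyond the paper is an explicit remark on why monotonicity holds, which the paper leaves to the FRLP computations and Theorem~\ref{apx_r_relation}.
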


Note that $\lambda_r$ for $r = 4$ (in both uniform and non-uniform case) is bounded by $1.275$, which is smaller than our lower bound for $FTFL$.

\section{The LP formulation}
\label{the_lp_section}

Consider the following standard LP relaxation of $FTFP$.

\begin{eqnarray}
\label{lp_ufl:goal}
  min \sum_{i \in F}{\sum_{j \in C} {c_{ij}x_{ij}}} & + & \sum_{i \in F} y_if_i\\
 \label{lp_ufl:r_satisfy}
  \sum_{i \in F} x_{ij}&\geq& r_j ~~~~\forall_{j \in C}\\
 \label{lp_ufl:open_enough}
   y_i - x_{ij}&\geq& 0 ~~~~~\forall_{i \in F, j \in C} \\
 \label{lp_ufl:non_negative}
   x_{ij}, y_{i} &\geq& 0 ~~~~~\forall_{i \in F, j \in C}
\end{eqnarray}

An optimal solution of the above LP is denoted by a pair $(x^*, y^*)$. Using these variables we express the total facility cost as $F^* = \sum_{i \in F}f_iy_i^*$ and the connection cost of each client $j \in C$ as $C_j^* = \sum_{i \in F}c_{ij}x_{ij}^*$. Summing over all clients gives the total connection cost $C^* = \sum_{j \in C}C_j^*$ of the LP solution. The cost of $(x^*, y^*)$ denoted by $LP^* = F^* + C^*$ is a lower bound on the cost of an optimal integral solution denoted by $OPT$.

We say that a solution is complete if for each client $j \in C$ and each facility $i \in F$ we have $x_{ij}^* \in \{0, y_i^*\}$. Detailed description of a technique called \textit{facility splitting}, which yields complete solutions, can be found in \cite{Sviridenko}. The splitting algorithm takes as input a solution of the LP and outputs a complete solution of the same cost to a larger, but equivalent instance of the problem. For clarity of a presentation, throughout the paper, we simply assume that all fractional solutions are complete.

\begin{definition}
 The volume of a set $F' \subseteq F$, denoted by $vol(F')$ is the sum of facility openings in this set, i.e., $vol(F') = \sum_{i \in F'} y_i$.
\end{definition}

One of the problems with input instances is possibly non-polynomial demand of some clients. In \cite{Yan} we can find an elegant reduction of such instance to instances with requirements bounded by $|F|$. In Section \ref{r_vs_apx} we give an algorithm which generalizes this reduction. Our algorithm also reduces the input instance to an instance with polynomial demands of clients, but we also care not to reduce the requirements of clients too much.

\section{Algorithm for $FTFP$}
The following algorithm $A(\gamma)$ is parametrized by a real constant $\gamma \in (1,3)$.
\begin{algorithm}
  \caption{$A(\gamma)$}
\begin{algorithmic}[1]
 \STATE formulate and solve the LP (\ref{lp_ufl:goal})-(\ref{lp_ufl:non_negative}), get an optimal solution $(x^*, y^*)$;
 \STATE scale up facility opening by $\gamma$, then recompute values of $x_{ij}$ to obtain a minimum cost solution $(\bar{x}, \bar{y})$;\\
 \STATE compute clustering for all clients;
 \STATE round facility opening variables using dependent rounding;
 \STATE connect each client $j$ with $r_j$ closest open facilities;
\end{algorithmic}
\end{algorithm}
Our final $Algorithm~1$ is as follows: run algorithm $A(\gamma_l)$ for each choice of $\gamma_l = 1 + 2 \cdot \frac{n - l}{n}$, where $l = 1, 2, \dots n-1$. Select the best of the obtained solutions. Note that $n-1$ is the number of different values of $\gamma$, each of them we use as a parameter of algorithm $A(\gamma)$. In the computation of approximation ratios we use $n$ equal 1000, but we will describe our results for a general n.

Scaling facility opening is an idea from \cite{Aardal}, it decreases average connection cost of each client, but increases total cost of opening facilities. In $FTFP$ we can open more than one facility in one location, so scaling does not cause problems with opening more than one facility in one place. The version of clustering which we use is very close to the one described in \cite{Chudak}. To round facility opening variables we use the randomized algorithm from \cite{Aravind}, called dependent rounding. Each step of the algorithm $A(\gamma)$ is carefully described in the following sections.

\subsection{Scaling}

Let $F_j$ denote the set of facilities with a positive flow from a client $j \in C$, i.e., facilities $i$ with $x_{ij}^* > 0$ in the optimal LP solution. 

Let $\gamma_l > 1$. Suppose that all facilities are sorted in an order of non-decreasing distances from a client $j \in C$. Scaling all $y^*$ variables by $\gamma_l$ divides the set of facilities $F_j$ into two disjoint subsets: the set of close facilities of a client $j$, denoted by $F_j^{C_l}$, such that $vol(F_j^{C_l}) = r_j$;  and the distant facilities, denoted by $F_j^{D_l} = F_j \setminus F_j^{C_l}$, note that $vol(F_j^{D_l}) = r_j(\gamma_l - 1)$. Certainly for each $i_1 \in F_j^{C_l}$ and $i_2 \in F_j^{D_l}$ we have $c_{i_{1}j} \leq c_{i_{2}j}$.

By $D_{av}^{C_l}(j), D_{av}^{D_l}(j)$ and $D_{av}(j)$ we denote the average distances to close, distant and all facilities in set $F_j$, respectively. More formally: $$D_{av}^{C_l}(j) = \frac{\sum_{i \in F_j^{C_l}} c_{ij} \bar{x}_{ij}}{vol(F_j^{C_l})}, \; \; \; D_{av}^{D_l}(j) = \frac{\sum_{i \in F_j^{D_l}} c_{ij} \bar{x}_{ij}}{vol(F_j^{D_l})}$$ By $D_{max}^l(j)$ we denote the maximal distance to a facility in $F_j^{C_l}$, and by $c_l(j)$ we denote the average distance to $F_j^l = F_j^{C_l} \setminus F_j^{C_{l-1}}$ for $n > l \geq 1$, $F_j^n = F_j \setminus F_j^{C_{n-1}}$ and $F_j^0 = \emptyset$. (see Fig.~\ref{partition_of_F_j})

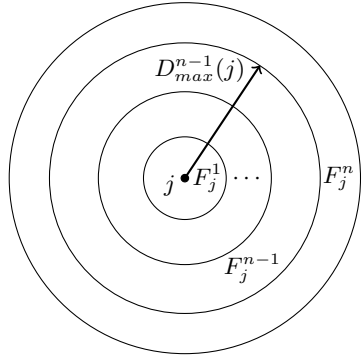
\begin{figure}
  \begin{tikzpicture}
      \draw[fill=black] (6,.5) circle (0.05);
      \path (5.8, 0.4) node {$j$};
      
      \draw (6,.5) circle (.55);
      \path (6.3, 0.5) node {$F_j^1$};
      
      \draw (6,.5) circle (1.15);
      \path (6.85, 0.5) node {$\cdots$};
      
      \draw (6,.5) circle (1.8);
      \path (6.9, -.7) node {$F_j^{n-1}$};
      \draw[thick, ->] (6, 0.5) -- (7, 2);
      \path (6.2, 1.95) node {$D_{max}^{n-1}(j)$};
      
      \draw (6,.5) circle (2.335);
      \path (8.05, 0.5) node {$F_j^n$};
      
      \draw (0.5,.5) circle (0);
  \end{tikzpicture}
  \caption{Figure shows partition of facilities in set $F_j$.}
  \label{partition_of_F_j}
\end{figure}
 
\subsection{Clustering}

\begin{definition}
 The radius of a set $A$ for a client $j$, where $A \subseteq F$ and $j \in C$, is $max_{i \in A}~c_{ij}$. Assume that $vol(A) \geq r$. By $B(j, A, r)$ we denote the subset of $A$ of volume $r$ which has the smallest radius.
\end{definition}

Each client $j \in C$ initially has a cluster proposition $CP(j) = B(j, F, r_j) = F_{j}^C$, whose radius is $q_j = D_{max}^{C}(j)$. In the following algorithm the cluster proposition of a client $j$ changes, but the radius never increases.
\begin{algorithm}
  \caption{Clustering}
\begin{algorithmic}[1]
 \FORALL{$j \in C$}
  \STATE $q_j := D_{max}^{C}(j)$
 \ENDFOR
 \WHILE{there is a client with positive requirement}
  \STATE select a client $j \in C$ with $r_j > 0$ that minimizes $q_j$ and set $r_j := 0$
  \FORALL{$j' \in N(j) = \{j'' \in C~ |~ CP(j) \cap CP(j'') \neq \emptyset \wedge r_{j''} > 0\}$}
   \STATE $r_{j'} := max(0, r_{j'} - \ceil{vol(CP(j) \cap CP(j'))})$
   \STATE $CP(j') := B(j', CP(j') \setminus CP(j), r_{j'})$
  \ENDFOR
  \STATE create $C(j) = \{j\} \cup N(j) \cup CP(j)$ and call $j$ the center of cluster $C(j)$;
 \ENDWHILE
\end{algorithmic}
\end{algorithm}

The above described procedure is a variant of the method described in \cite{Chudak}. It is well known that output of the procedure has two important properties. First: each facility is clustered by at most one client. Second: the distance from a client to any of his cluster centers is not too big.

\begin{lemma}
\label{3hop_bound}
Distance from any client $j \in C$ to any close facility of $j' \in C$ such that $j \in C(j')$ is bounded by $3 \cdot D_{max}^C(j)$.
\end{lemma}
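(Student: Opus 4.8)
The plan is to track the facility that witnesses the smallest-radius cluster proposition of $j'$ at the moment $j'$ was assigned to the cluster $C(j')$, and to walk from $j$ to that facility, then across to a close facility of $j'$, using the triangle inequality at each step. So let $j' \in C$ be a cluster center with $j \in C(j')$. By the definition of $C(j')$ in line 10 of the Clustering algorithm, either $j = j'$ (in which case the bound is trivial, since a close facility of $j'$ is within $D_{max}^C(j) \le 3 D_{max}^C(j)$ of $j' = j$), or $j \in N(j')$, which means $CP(j') \cap CP(j) \neq \emptyset$ at the time $j'$ was selected in line 5.

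First I would fix a facility $i^* \in CP(j') \cap CP(j)$. Since $i^* \in CP(j)$ and the radius of a cluster proposition never increases during the algorithm, we have $c_{i^* j} \le q_j \le D_{max}^C(j)$ (here I use that the initial radius of $CP(j)$ is $D_{max}^C(j)$). Next, since $i^* \in CP(j')$ and $j'$ was the client selected in line 5 precisely because it minimized the current value $q_{j'}$ among all clients with positive requirement, and $j$ still had positive requirement at that moment (otherwise $j$ would not be in $N(j')$), we get $q_{j'} \le q_j \le D_{max}^C(j)$; hence $c_{i^* j'} \le q_{j'} \le D_{max}^C(j)$. Finally, let $i$ be any close facility of $j'$, i.e. $i \in F_{j'}^C = CP(j')$ initially; its distance to $j'$ is at most the radius of $j'$'s original cluster proposition, namely $c_{i j'} \le D_{max}^C(j') \le q_{j'} \le D_{max}^C(j)$. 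Combining the three estimates with the triangle inequality along the path $j \to i^* \to j' \to i$ gives
\[
c_{ij} \le c_{i^* j} + c_{i^* j'} + c_{i j'} \le 3 D_{max}^C(j).
\]

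The main obstacle I anticipate is the bookkeeping around \emph{when} each inequality holds: the radii $q_{j'}$ and the cluster propositions $CP(j')$ change over the course of the algorithm, so one must be careful to evaluate $c_{i^* j'} \le q_{j'}$ at the iteration in which $j'$ becomes a center, and to use monotonicity of the radius (stated in the paragraph preceding the Clustering algorithm) to relate that to the original $D_{max}^C(j')$ and to $c_{i^* j}$. A secondary subtlety is confirming that $j$ indeed still has positive requirement at the moment $j'$ is selected — this is exactly the membership condition $r_{j''} > 0$ in the definition of $N(j)$ in line 6, so the ordering "$j'$ minimizes $q_{j'}$ over clients with positive requirement" legitimately yields $q_{j'} \le q_j$. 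Everything else is routine triangle-inequality chaining.
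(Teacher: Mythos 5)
Your proof is correct and follows essentially the same route as the paper's: pick a facility $i^*$ in $CP(j)\cap CP(j')$, use the selection rule to get $q_{j'}\le q_j\le D_{max}^C(j)$, and chain the triangle inequality along $j\to i^*\to j'\to i$ to obtain the $3\cdot D_{max}^C(j)$ bound. Your additional bookkeeping about when each inequality is evaluated (and the trivial case $j=j'$) only makes the argument more careful than the paper's version.
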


\begin{proof}
Suppose that $j' \in C$ and $j \in C_{C}(j') = C(j') \cap C$. From the fact that $j \in C_{C}(j')$ follows that $q_{j'} \leq q_{j}$, which is equivalent with $D_{max}^C(j') \leq D_{max}^C(j)$. The definition of $CP(\cdot)$ assures that the distance from $j$ ($j'$) to any facility in $CP(j)$ ($CP(j')$) can be bounded by $D_{max}^C(j)$ $(D_{max}^C(j'))$. Consider $i' \in CP(j) \cap CP(j')$ and any $i \in CP(j')$. Distance from $j$ to $i$ is $c_{i'j} + c_{i'j'} + c_{ij'} \leq D_{max}^C(j) + 2 \cdot D_{max}^C(j') \leq 3 \cdot D_{max}^C(j)$.
\qed
\end{proof}

\subsection{Facility opening}

A randomized procedure deciding whether a particular facility should be open or not transforms the fractional $\bar{y}$ into a random integral $\hat{y}$. 
We would like the procedure to have the following properties: 
\begin{enumerate}
 \item \emph{Marginal distribution}:$~Pr[\hat{y}_i = 1] = \bar{y}_i$
 \item \emph{Sum-preservation}:$~\sum_{i \in C_F(j)} \hat{y}_i \in \{\floor{vol(C_F(j))}, \ceil{vol(C_F(j))}\}$
 \item \emph{Negative correlation}:$~\displaystyle \forall S \subseteq C_F(j) \forall b \in \{0, 1\} Pr[\bigwedge_{i \in S}(\hat{y}_i = b)] \leq \prod_{i \in S} Pr[\hat{y}_i = b]$
\end{enumerate}

One method which gives an output with the above properties is the dependent rounding (DR) from \cite{Aravind}. 
Each cluster can have many facilities open fractionally. We first apply DR to each $C_F(j) = C(j) \cap F$, where $j$ is the center of a cluster. 
Then the remaining fractional facility openings are rounded by DR in an arbitrary order.

\section{Analysis}

To bound the expected connection cost of an algorithm $A(\gamma)$, we need to first analyse the number of facilities which will be opened in a set of a particular volume. Suppose that facilities are opened independently and that in the limit case all facilities are opened very little in the fractional solution, then the number of eventually open facilities from a set has the Poisson distribution. By the negative correlation this distribution can be used to derive the following lower bound on the number of useful opened facilities from the considered set.

\begin{observation} 
The expected number of possible connections with set $A$ of volume $\varLambda = vol(A)$, when the requirement is k, is $h(\varLambda, k) \geq \sum_{i = 1}^{k-1}i P_{\varLambda}(X = i) + k P_{\varLambda}(X \geq k)$. Where $P_{\varLambda}(X = i) = \frac{\varLambda^i e^{-\varLambda}}{i!}$ is the probability of opening exactly $i$ facilities in a set of volume $\varLambda$, if opened independently (Poisson distribution).
\end{observation}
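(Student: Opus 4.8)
The plan is to read $h(\varLambda,k)$ as the expectation of the number of open facilities of $A$ that the client actually uses. Let $N_A=\sum_{i\in A}\hat y_i$ be the random number of facilities of $A$ opened by the rounding step; a client of requirement $k$ can route demand to at most $k$ of them, so the number of useful copies in $A$ is $\min(N_A,k)$ and $h(\varLambda,k)=\mathbb E[\min(N_A,k)]$. Using the identity $\min(N,k)=\sum_{\ell=1}^{k}\mathbf 1[N\ge\ell]$ and rearranging, $\mathbb E[\min(N_A,k)]=\sum_{\ell=1}^{k}\Pr[N_A\ge\ell]=\sum_{i=1}^{k-1}i\Pr[N_A=i]+k\Pr[N_A\ge k]$, so the claimed inequality is exactly $\mathbb E[\min(N_A,k)]\ge\mathbb E[\min(Y,k)]$ where $Y\sim\mathrm{Poisson}(\varLambda)$ and the right-hand side is the stated expression written out. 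Equivalently, since $\mathbb E[N_A]=\sum_{i\in A}\bar y_i=\varLambda=\mathbb E[Y]$ by the marginal-distribution property, it suffices to prove $\mathbb E[(N_A-k)_+]\le\mathbb E[(Y-k)_+]$, i.e.\ a comparison of $N_A$ and $Y$ under the \emph{convex} increasing map $x\mapsto(x-k)_+$.

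I would establish this by two concentration steps, the Poisson appearing as the extreme, most spread case precisely because it is the $\bar y_i\to0$ limit alluded to in the text preceding the observation. Step one compares $N_A$ to the independent rounding $N'_A=\sum_{i\in A}B(\bar y_i)$ with the same marginals and shows $\mathbb E[(N_A-k)_+]\le\mathbb E[(N'_A-k)_+]$: this is where property~3 enters, since applied with $b=1$ it gives $\mathbb E[e^{\theta N_A}]\le\prod_{i\in A}\mathbb E[e^{\theta\hat y_i}]$ for $\theta\ge0$ and with $b=0$ it gives $\mathbb E[e^{-\theta N_A}]\le\prod_{i\in A}\mathbb E[e^{-\theta\hat y_i}]$, so $N_A$ has no heavier exponential moments than $N'_A$ on either side, the usual expression of "the rounding is at least as concentrated as independent rounding''. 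Step two compares $N'_A$ to $Y$: splitting one coordinate $B(p)$ into two independent coordinates $B(p_1),B(p_2)$ with $p_1+p_2=p$ is a mean‑preserving spread of the sum (it moves mass $p_1p_2$ symmetrically from value one out to values zero and two), so the sum only grows in the convex order, and iterating to infinitely many infinitesimal coordinates makes $N'_A$ converge in distribution to $Y$; hence $\mathbb E[(N'_A-k)_+]\le\mathbb E[(Y-k)_+]$. One also notes $\prod_{i\in A}\mathbb E[e^{\pm\theta\hat y_i}]\le e^{\varLambda(e^{\pm\theta}-1)}=\mathbb E[e^{\pm\theta Y}]$ via $1+t\le e^{t}$, the quantitative form of step two at the level of generating functions, which in particular recovers $\Pr[N_A=0]\le e^{-\varLambda}$ (the $k=1$ case) directly.

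The main obstacle is making step one fully rigorous from only the three listed properties: two-sided control of the exponential moments of $N_A$ does not by itself imply $\mathbb E[\phi(N_A)]\le\mathbb E[\phi(N'_A)]$ for the specific convex $\phi(x)=(x-k)_+$, because $(x-k)_+$ is not a nonnegative combination of the functions $x\mapsto e^{\pm\theta x}$ and an affine term (on the finite grid $\{0,\dots,|A|\}$ it is an extreme ray of the cone of convex functions, while the exponentials lie in its interior). I would close this either by using the stronger and true fact that the dependent rounding of~\cite{Aravind} outputs negatively \emph{associated} variables, for which $\mathbb E[\phi(\sum_i\hat y_i)]\le\mathbb E[\phi(\sum_iB(\bar y_i))]$ for every convex $\phi$ is standard, or by a direct coupling/exchange argument that converts a dependently rounded configuration into the independent one one coordinate at a time without decreasing $\mathbb E[(x-k)_+]$. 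A secondary, easier point to record is that the inequality is only ever invoked in the regime $k\le\varLambda=vol(A)$ (for a close set $vol(F_j^{C_l})=r_j$, for the larger sets $vol>r_j$), so $\min(\cdot,k)$ is evaluated on distributions whose mean is at least $k$; this is not needed for the inequality but it is the situation in which the bound is strong enough to be useful in the analysis of $A(\gamma)$.
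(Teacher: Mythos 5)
The paper does not actually prove this Observation: it is asserted on the strength of the one-sentence heuristic that precedes it (compare the dependent rounding to independent rounding via negative correlation, then pass to the Poisson limit). Your proposal is a correct formalization of exactly that heuristic. The reduction is right: $h(\varLambda,k)=\mathbb{E}[\min(N_A,k)]$, the identity $\sum_{\ell=1}^{k}\Pr[N\ge\ell]=\sum_{i=1}^{k-1}i\Pr[N=i]+k\Pr[N\ge k]$ identifies the stated right-hand side as $\mathbb{E}[\min(Y,k)]$ for $Y\sim\mathrm{Poisson}(\varLambda)$, and since $\mathbb{E}[N_A]=\varLambda=\mathbb{E}[Y]$ the claim is the convex-order inequality $\mathbb{E}[(N_A-k)_+]\le\mathbb{E}[(Y-k)_+]$. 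Your step two (splitting a Bernoulli into two independent Bernoullis with the same total mean is a mean-preserving spread, convex order is preserved under independent sums and under the distributional limit to the Poisson) is standard and correct.

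The one substantive issue is the one you flagged yourself: property~3 as listed yields only two-sided exponential-moment domination, which does not imply $\mathbb{E}[(N_A-k)_+]\le\mathbb{E}[(N'_A-k)_+]$, so the comparison to independent rounding is a genuine gap if one works solely from the three listed properties. You are also right about how to close it, and the intended closure in this line of work is your second option: \cite{Ghodsi} and \cite{ftfl_1725} prove, by an exchange argument over the elementary steps of the rounding of \cite{Aravind} (each step can only increase $\mathbb{E}[g(\sum_i\hat y_i)]$ for concave $g$), that $\mathbb{E}[\min(k,\sum_i\hat y_i)]\ge\mathbb{E}[\min(k,\sum_i B(\bar y_i))]$; this is the lemma the present paper implicitly relies on and should have cited at this point. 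Your first option (negative association plus a Shao-type comparison theorem) would also suffice, but negative association of the rounded variables is not among the properties listed here and is not established in \cite{Aravind}, so it trades one missing lemma for another. A minor correction to your closing remark: in the analysis the bound is also invoked at volumes $\gamma_k\cdot vol(F^{C_l})<r$ (for $l<k$ in the definition of $e_1^{k,l}$), so it is not true that $k\le\varLambda$ in every application --- fortunately, as you note, the inequality does not need this.
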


\begin{lemma}
\label{alg_A_connection_cost}
Suppose that $\gamma=\gamma_k$. Consider a client $j \in C$ which is not a center of any cluster. The expected connection cost of client $j$ is at most
$$E[C_j] \leq \sum_{l = 1}^{n-1} c_l(j) \cdot \frac{e^{k,l}_{1}(j)}{r_j} + \frac{e^{k}_{3}(j)}{r_j} \cdot 3 D_{max}^k(j)$$
Where $e_1^{k, l}(j)$ is expected number of open facilities in set $F_j^l$, in which opening of each facility is scaled by $\gamma_k$; $e_3^{k}(j)$ is $r_j$ decreased by expected number of open facilities in set $F_j$, in which opening of each facility is scaled by $\gamma_k$ (or zero if number of open facilities in $F_j$ is bigger than $r_j$).
\end{lemma}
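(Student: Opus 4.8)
The plan is to bound the expected connection cost of a non-center client $j$ by exhibiting an explicit (randomized) connection strategy and arguing that its expected cost is no larger than the claimed bound; since the algorithm connects $j$ to its $r_j$ closest open facilities, the actual expected cost can only be smaller. The natural strategy is the following: whenever a facility in the annulus $F_j^l$ (for $l=1,\dots,n-1$) is open, use it as one of $j$'s connections, paying at most the average distance $c_l(j)$ for it in expectation; if after scanning all of $F_j$ the number of open facilities in $F_j$ still falls short of $r_j$, route each of the remaining (at most $r_j$ minus that number) demands through a cluster center. The key point for the last part is Lemma~\ref{3hop_bound}: because $j$ is not a center, $j$ lies in some cluster $C(j')$, and $j'$ has at least one close facility open (the sum-preservation property guarantees $\sum_{i\in C_F(j')}\hat y_i\ge\floor{vol(C_F(j'))}\ge r_{j'}\ge 1$, and $CP(j')\subseteq C_F(j')$), so $j$ can reach an open facility at distance at most $3D_{max}^C(j')\le 3D_{max}^k(j)$, using the inequality $D_{max}^k(j')\le D_{max}^k(j)$ that holds because $j$ was processed after $j'$ in the clustering loop.

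The expected-cost accounting then splits into two sums matching the statement. For the annuli: the expected number of open facilities in $F_j^l$ is exactly $e_1^{k,l}(j)$ by the marginal-distribution property (each $i$ is open with probability $\bar y_i$, and the volumes of the $F_j^l$ add up appropriately after scaling by $\gamma_k$); since each facility $i\in F_j^l$ that is open contributes distance $c_{ij}$ and the demand it satisfies is one unit out of $r_j$, summing and using the definition of $c_l(j)$ as the $\bar x$-weighted average distance over $F_j^l$ gives the term $c_l(j)\cdot e_1^{k,l}(j)/r_j$. (One has to be slightly careful that we never want to "use" more than $r_j$ open facilities, but discarding surplus open facilities only decreases cost, so the bound is still valid.) For the shortfall: the expected number of demands of $j$ left unserved after the annulus phase is exactly $e_3^k(j)=\max\{0,\,r_j-\mathbb E[\#\text{open in }F_j]\}$ — here one invokes the Observation to see that $h(\varLambda,r_j)$, the expected number of useful connections from a volume-$\varLambda$ set, is a valid lower bound on the number actually available, so the expected shortfall is at most $r_j-h$; each such residual unit of demand costs at most $3D_{max}^k(j)$, giving the second term after dividing by $r_j$.

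The main obstacle — and the place the argument needs care rather than being purely routine — is the interchange between expectation and the "$\max\{0,\cdot\}$" / truncation operations: the random number of open facilities in $F_j$ can exceed $r_j$, in which case there is no shortfall, and it can be less, in which case the shortfall is $r_j$ minus that number; to get $e_3^k(j)=\max\{0, r_j-\mathbb E[\cdot]\}$ rather than $\mathbb E[\max\{0, r_j-\cdot\}]$ one needs to observe that we are free to under-count the open facilities in $F_j$ (use only the first $r_j$ units of volume worth of them, or equivalently cap the count at $r_j$), and that the relevant lower bound $h(\varLambda,r_j)$ from the Observation is precisely $\mathbb E[\min\{r_j,\#\text{open}\}]$ under the Poisson idealization, with the negative-correlation property ensuring the real distribution dominates it in the required sense. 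Once that truncation bookkeeping is pinned down, combining the two phases and dividing the total by $r_j$ (to express things per unit of demand, since the displayed bound is $E[C_j]$, not $E[C_j]/r_j$ — note $c_l(j)$ and $3D_{max}^k(j)$ are per-connection costs and the $e$'s count connections, so the $1/r_j$ factors in the statement indicate the stated $c_l(j), D_{max}^k(j)$ should be read as the total over all $r_j$ demands, which we will make precise) yields the claimed inequality. A second, minor subtlety is that the annuli $F_j^l$ were defined via the scaled solution $(\bar x,\bar y)$ while $F_j$ and its volume $r_j$ refer to the original flow; keeping the scaling factor $\gamma_k$ attached consistently to the opening variables throughout (as the statement's wording "scaled by $\gamma_k$" signals) resolves this.
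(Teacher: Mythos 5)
Your proposal follows essentially the same route as the paper's (much terser, three-sentence) proof: charge each open facility in the annulus $F_j^l$ the average distance $c_l(j)$, and serve the expected shortfall $e_3^k(j)$ via cluster centers at distance at most $3D_{max}^k(j)$ using Lemma~\ref{3hop_bound}. Your additional care about the truncation bookkeeping ($e_3^k$ being the expectation of the shortfall $\max\{0,r_j-N\}$ via $h=\mathbb{E}[\min\{r_j,N\}]$ under the Poisson lower bound guaranteed by negative correlation) and about the $1/r_j$ normalization correctly fills in details the paper leaves implicit.
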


\begin{proof}
 The value of $e_1^{k, l}(j)$ is the expected number of open facilities in the set $F_j^l = F_j^{C_l} \setminus F_j^{C_{l-1}}$, when all fractional openings of facilities are scaled up by $\gamma_k$. Connection cost of a client $j$ with an open facility in this set is $c_l(j)$. The expected number of connections which $j$ has to establish with close facilities of his cluster centers is $e_3^{k}(j)$ - his requirement reduced by the number of facilities opened in $F_j$. Lemma (\ref{3hop_bound}) bounds the distance to close facilities of cluster centers of $j$.
 \qed
\end{proof}

\subsection{Factor revealing LP}
\label{lp_factor_revel}
Consider running an algorithm $A(\gamma_l)$ for $\gamma_l = 1 + 2 \cdot \frac{n - l}{n}$ where $l = 1, 2, \dots n-1$. Observe that the following linear program, called FRLP, covers all that executions. Value of the objective function is an upper bound on the approximation ratio of the best of the obtained solutions.
\begin{eqnarray}
  \label{total_lp:max}
  max~\lambda_r && \\
  \label{total_lp:t_upperbound}
  \gamma_k f + \sum_{l = 1}^{n-1} c_l \cdot \frac{e_1^{k, l}}{r} + \frac{e_3^k}{r} \cdot 3 \cdot c_{l+1} \geq \lambda_r &&  \forall_{k < n}\\
  \label{total_lp:c_constraint}
  \sum_{l = 1}^{n}vol(F^l) \cdot c_l = c && \\
  \label{total_lp:c_order}
  0 \leq c_i \leq c_{i+1} \leq 1&&\forall_{i < n}~~~~~\\
  \label{total_lp:opt_sol}
  f + c = 1 && \\
  f, c\geq 0 &&
\end{eqnarray}

The above $LP$ encodes the cost of solutions obtained in executions of an algorithm $A$ for different values of the scaling parameter $\gamma_k$ for $k = 1, 2, \dots n-1$. Adversary has the freedom to choose the distances from client $j$ to groups of facilities and the relation between values of $f$ and $c$ in the optimal solution, which have to sum up to one, and (both) be non-negative. We consider all facilities in the order of a non-decreasing distance from the client $j$, so the average distances to consecutive groups of facilities have to be non-decreasing, see constraint (\ref{total_lp:c_order}). We divide facilities into sets $F_j^l$, for $1 \leq l \leq n$. In each set $F_j^l$ the adversary may choose the distance from client $j$ to the open facility in $F_j^l$, which is the worst for our algorithm and equals $c_l(j)$. Equality (\ref{total_lp:c_constraint}) shows that the sum of average distances, each weighted by the volume of facilities at such distance, has to sum up to the total connection cost in 
the optimal solution.  
The crucial inequality (\ref{total_lp:t_upperbound}) encodes the expected cost of an algorithm $A(\gamma_k)$ and it is used as an upper bound for the approximation ratio. Client in inequality (\ref{total_lp:t_upperbound}) is a client with minimum requirement $r$,
$e_1^{k,l} = h(\gamma_k \cdot vol(F^{C_l}, r)) - h(\gamma_k \cdot vol(F^{C_{l-1}}), r)$ is expected number of open facilities in set $F^l$ and $e_3^{k} = r - h(\gamma_k \cdot r, r)$. Correctness of this inequality follows from Lemmas (\ref{3hop_bound}), (\ref{alg_A_connection_cost}) and $D_{max}^{l} \leq c_{l+1}$.
If $r = 1$ then instead of $Algorithm~1$ we use method from \cite{Yan}. To improve the approximation ratio from $1.575$ to $1.52$ we run the algorithm from~\cite{Yan} for a number of values of the scaling parameter $\gamma_l = 1 + 2 \cdot \frac{n-l}{n}$, where $l = 1, 2,\ldots n-1$. It can be analyzed by FRLP. The computed values of $\lambda_r$, for $r = 1, 2, \dots 10$, are in the following table:

\begin{center}
  \begin{tabular}{ c | c | c | c | c | c | c | c | c | c | c }
    $r$ & 1 & 2 & 3 & 4 & 5 & 6 & 7 & 8 & 9 & 10 \\ \hline
    $\lambda_r$ & 1.515 & 1.439 & 1.338 & 1.275 & 1.234 & 1.207 & 1.187 & 1.171 & 1.159 & 1.149 \\
  \end{tabular}
\end{center}

\begin{figure}
  \centering
  \includegraphics[height=60mm]{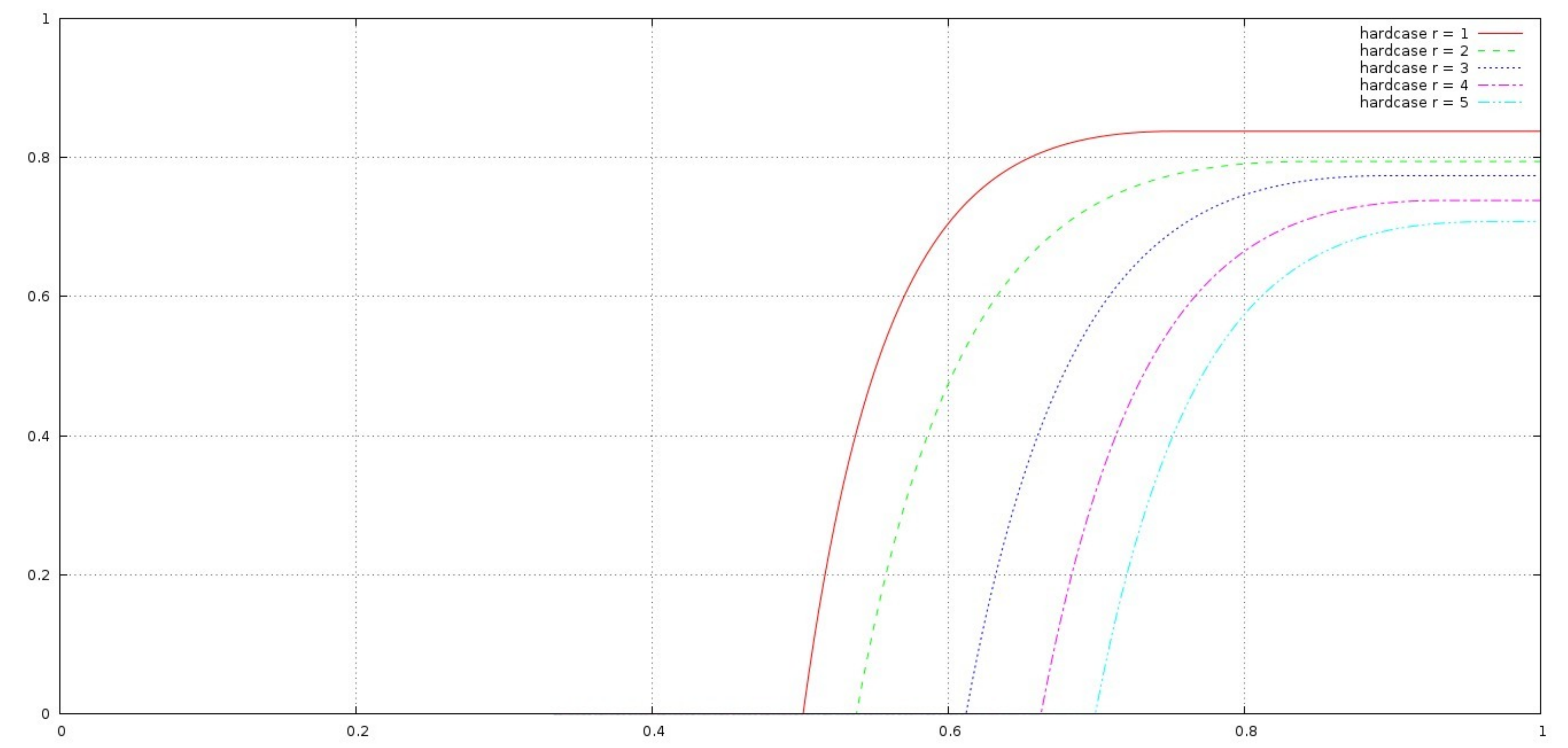}
 \caption{The profiles of distances in tight instances for $Algorithm~1$ for $FTFP$ (in a general, non-uniform case) for $1 \leq r \leq 5$, extracted from the $FRLP$ solutions. The x-axis encodes the volume of a set of facilities closest to a client and the y-axis is the distance to the farthest facility in this set.}
  \label{fig:hardcase}
\end{figure}

\subsection{Uniform requirement}
As it was shown in \cite{Swamy} the JMS algorithm can be modified to work with $FTFL$ with uniform requirements of clients, and the approximation ratio remains the same. In consequence it also works for $FTFP$ with uniform requirements of clients. We can add one more constraint $1.11 f + 1.78 c \geq \lambda_r$ to the $FRLP$ in Section~\ref{lp_factor_revel} which encodes that we additionally run the (modified) JMS algorithm\footnote{An algorithm for UFL is called (a,b)-approximation if the cost of returned solution is upper bounded by $a \cdot F^* + b \cdot C^*$, where $F^*$ and $C^*$ are, respectively, the costs of establishing connections and opening facilities in an optimal solution}. Such FRLP for $r = 1$ is a dual of the LP from \cite{ShiLi}, probabilities of particular algorithms in Shi Li paper are dual values of constraints in FRLP. As you can see in the following table, adding the JMS algorithm makes difference only for small values of $r$.

\begin{center}
  \begin{tabular}{ c | c | c | c | c | c | c | c | c | c | c }
    $r$ & 1 & 2 & 3 & 4 & 5 & 6 & 7 & 8 & 9 & 10 \\ \hline
    non-uniform & 1.515 & 1.439 & 1.338 & 1.275 & 1.234 & 1.207 & 1.187 & 1.171 & 1.159 & 1.149 \\ \hline
    uniform & 1.488 & 1.410 & 1.329 & 1.272 & 1.234 & 1.207 & 1.187 & 1.171 & 1.159 & 1.149 \\
  \end{tabular}
\end{center}

\section{Factor $\lambda_r$ is a decreasing function of $r$}
\label{r_vs_apx}

\begin{lemma}
\label{technical_lemma}
 Function $f(r) = \frac{r - h((1+\epsilon)r, r)}{r}$ converges to 0 when $r \mapsto \infty$.
\end{lemma}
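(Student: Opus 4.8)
The plan is to read $h$ probabilistically. By the Observation, $h(\varLambda,k)\ge \sum_{i=1}^{k-1} i\,P_{\varLambda}(X=i)+k\,P_{\varLambda}(X\ge k)=E[\min(X,k)]$, where $X$ is a Poisson random variable with mean $\varLambda$. Specializing to $\varLambda=(1+\epsilon)r$ and $k=r$, and using $E[\min(X,r)]\le r$, we obtain
$$0\le r-h((1+\epsilon)r,r)\le r-E[\min(X,r)]=E\big[(r-X)^+\big],$$
so $f(r)\ge 0$ and it suffices to prove $\frac1r\,E[(r-X)^+]\to 0$ as $r\to\infty$. (The fact that $h$ is only specified through an inequality is harmless here: replacing $h$ by the possibly larger quantity $E[\min(X,r)]$ only decreases $f(r)$ while keeping it nonnegative, so an upper bound on $\frac1r E[(r-X)^+]$ is an upper bound on $f(r)$.)

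Next I would turn this expectation into a tail probability. Since $(r-X)^+$ is supported on $\{X=0,1,\dots,r\}$ and is at most $r$ on that event,
$$E\big[(r-X)^+\big]=\sum_{i=0}^{r}(r-i)\,P_{(1+\epsilon)r}(X=i)\le r\cdot P_{(1+\epsilon)r}(X\le r),$$
hence $0\le f(r)\le P_{(1+\epsilon)r}(X\le r)$.

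The heart of the argument is that $X$ has mean $(1+\epsilon)r>r$, so $P(X\le r)$ is a lower-tail deviation and decays exponentially in $r$. Concretely, the standard Chernoff bound for the Poisson distribution gives, for any $a<\varLambda$,
$$P_{\varLambda}(X\le a)\le e^{-\varLambda}\frac{(e\varLambda)^{a}}{a^{a}};$$
plugging in $\varLambda=(1+\epsilon)r$ and $a=r$ yields $P_{(1+\epsilon)r}(X\le r)\le e^{-r\left(\epsilon-\ln(1+\epsilon)\right)}$. Since the function $\epsilon\mapsto\epsilon-\ln(1+\epsilon)$ vanishes at $0$ and has derivative $\frac{\epsilon}{1+\epsilon}>0$ for $\epsilon>0$, the exponent is strictly positive, so the bound tends to $0$. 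Combining the three displays, $0\le f(r)\le e^{-r(\epsilon-\ln(1+\epsilon))}\to 0$, which proves the lemma.

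The only genuine obstacle is the Poisson tail estimate. If one prefers not to quote a Chernoff bound, it can be derived directly by Markov's inequality applied to $e^{-tX}$ with $E[e^{-tX}]=e^{\varLambda(e^{-t}-1)}$, optimized over $t>0$; this reproduces exactly the exponent $\epsilon-\ln(1+\epsilon)$. Everything else is bookkeeping: the identification $r-h=E[(r-X)^+]$, the trivial bound $E[(r-X)^+]\le r\,P(X\le r)$, and the elementary positivity of $\epsilon-\ln(1+\epsilon)$.
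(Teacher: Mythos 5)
Your proof is correct, and it takes a genuinely different route from the paper's in its second half. Both arguments start identically: expanding the Observation's formula turns $r-h((1+\epsilon)r,r)$ into the Poisson lower-tail quantity $\sum_{i}(r-i)P_{(1+\epsilon)r}(X=i)$, which is exactly your $E[(r-X)^{+}]$ (you even retain the $i=0$ term, which the paper's algebra silently drops when it writes $P(X\ge r)=1-\sum_{i=1}^{r-1}P(X=i)$; the omission is harmless but yours is the cleaner accounting). From there the paper bounds the sum termwise by $\frac{(1+\epsilon)^{r-1}r^{r}}{(r-1)!\,e^{(1+\epsilon)r}}$ and then delegates the final limit to a computer algebra system, whereas you bound $E[(r-X)^{+}]\le r\,P(X\le r)$ and apply the standard Chernoff bound for the Poisson lower tail, getting the explicit rate $e^{-r(\epsilon-\ln(1+\epsilon))}$ with a manifestly positive exponent. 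Your version buys a self-contained, fully hand-checkable proof with a quantitative decay rate (which could be used to make the threshold $r_{0}$ in Theorem~\ref{apx_r_relation} effective); the cost is quoting, or re-deriving via $E[e^{-tX}]$, a tail inequality, which you correctly indicate how to do. One small verbal slip: in your parenthetical you call $E[\min(X,r)]$ the ``possibly larger'' quantity, but the Observation gives $h\ge E[\min(X,r)]$, so it is the possibly \emph{smaller} one and substituting it can only \emph{increase} $r-h$; your displayed inequality chain has the directions right, so this does not affect the argument.
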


\begin{proof}
We show that 
$$\lim_{r \mapsto \infty} \frac{r - h((1+\epsilon)r, r)}{r} = 0$$
\begin{eqnarray*}
\frac{r - h((1+\epsilon)r, r)}{r} &=& 
\frac{r - (\sum_{i = 1}^{r-1}iP_{(1+\epsilon)r}(X = i) + r P_{(1+\epsilon)r}(X \geq i))}{r} \nonumber \\
& = &\frac{r - (\sum_{i = 1}^{r-1}iP_{(1+\epsilon)r}(X = i) + r (1 - \sum_{i = 1}^{r-1}P_{(1+\epsilon)r}(X = i)))}{r} \nonumber \\
& = &\frac{\sum_{i = 1}^{r-1} (r-i) P_{(1+\epsilon)r}(X = i) }{r} \nonumber \\
& = &\sum_{i = 1}^{r-1} \frac{(r-i) (1+\epsilon)^i \cdot r^i }{r\cdot i! \cdot e^{(1+\epsilon)r}} \leq \frac{(1+\epsilon)^{r-1} \cdot r^r}{(r-1)!e^{(1+\epsilon)r}} \nonumber
\end{eqnarray*}

It remains to verify that $$\forall_{\epsilon > 0} \lim_{r \mapsto \infty} \frac{(1+\epsilon)^{r-1} \cdot r^r}{(r-1)!e^{(1+\epsilon)r}} = 0,$$ 
which we did using wolfram alpha (http://www.wolframalpha.com/).
\end{proof}

\begin{theorem}
\label{apx_r_relation}
 For each choice of $\epsilon > 0$ there exists $r_0$ such that for each $r \geq r_0$ inequality $\lambda_r~\leq~1~+~\epsilon$ holds.
\end{theorem}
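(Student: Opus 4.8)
The plan is to bound the factor-revealing LP of Section~\ref{lp_factor_revel} from above. Since $\lambda_r$ is (by definition) its optimal value, and it is constrained by inequality~(\ref{total_lp:t_upperbound}) for \emph{every} $k<n$, it suffices to show that for any feasible choice of the adversary's variables $(f,c,c_1,\dots,c_n)$ there is a single index $k$ for which the left-hand side of~(\ref{total_lp:t_upperbound}) is at most $1+\epsilon$. Recalling that $n$ is a free parameter of Algorithm~1, I would fix $n$ large enough that $\delta:=\gamma_{n-1}-1=\tfrac{2}{n}\le\tfrac{\epsilon}{3}$, and then use $k:=n-1$, i.e.\ the run $A(\gamma_{n-1})$ whose scaling is closest to $1$.

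The core is to estimate the three summands of~(\ref{total_lp:t_upperbound}) for this $k$ (equivalently, the terms of Lemma~\ref{alg_A_connection_cost} plus the scaled facility cost). The facility term is $\gamma_{n-1}f\le(1+\delta)f\le f+\delta$, using $f\le 1$ from~(\ref{total_lp:opt_sol}). The three-hop term equals $3D_{max}^{k}\cdot\tfrac{e_3^k}{r}$ with $D_{max}^{k}\le c_{k+1}=c_n\le 1$ by~(\ref{total_lp:c_order}), while $\tfrac{e_3^k}{r}=\tfrac{r-h(\gamma_{n-1}r,\,r)}{r}$ is exactly the quantity $f(r)$ of Lemma~\ref{technical_lemma} with its parameter set to $\gamma_{n-1}-1=\delta$; by that lemma it tends to $0$, so there is $r_0=r_0(\epsilon)$ with $3\tfrac{e_3^k}{r}\le\delta$ for all $r\ge r_0$. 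The delicate term is the close-connection sum $\sum_{l=1}^{n-1}c_l\,\tfrac{e_1^{k,l}}{r}$: since $\Lambda\mapsto h(\Lambda,r)$ is non-decreasing and $1$-Lipschitz, each $e_1^{k,l}=h(\cdot)-h(\cdot)$ is at most the scaled opening volume of the shell $F_j^l$, and for $k=n-1$ these shells exactly exhaust the close facilities and their scaled volumes sum to the requirement; attaching them to the non-decreasing $c_1\le\dots\le c_{n-1}$, their weighted average is at most the weighted average over all of $F_j$, i.e.\ $\sum_{l=1}^{n-1}c_l\,\tfrac{e_1^{k,l}}{r}\le D_{av}^{C}(j)\le D_{av}(j)=c$.

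Adding the three bounds, the left-hand side of~(\ref{total_lp:t_upperbound}) for $k=n-1$ is at most $(f+\delta)+c+\delta=f+c+2\delta=1+2\delta\le 1+\epsilon$ whenever $r\ge r_0$, using $f+c=1$ from~(\ref{total_lp:opt_sol}) and $\delta\le\tfrac{\epsilon}{3}$. Hence $\lambda_r\le 1+\epsilon$, as claimed. Phrased algorithmically: the scaled facilities cost $(1+\delta)F^*$; the expected connection cost is at most $C^*$ from the close facilities plus a fallback of at most $3D_{max}^{k}\cdot\tfrac{e_3^k}{r}\le 3\tfrac{e_3^k}{r}\cdot LP^*$; dividing by $LP^*$ gives ratio $\le 1+\delta+3f(r)$, which tends to $1+\delta$, and then $\delta\to 0$ as $n\to\infty$.

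The only step that is not pure bookkeeping is the close-connection bound $\sum_{l=1}^{n-1}c_l\,\tfrac{e_1^{k,l}}{r}\le c$: it expresses that scaling inflates the facility cost but not the connection cost, and it needs both the monotonicity and $1$-Lipschitz property of $h$ (so each $e_1^{k,l}$ is controlled by its shell's scaled volume) and the fact that the close facilities $F_j^{C}$ are the cheapest requirement-worth of (scaled) opening, so that placing the Poisson weights on the closest shells can only decrease the average distance below $c=D_{av}(j)$. Everything else — the facility term, the three-hop term via Lemma~\ref{technical_lemma} with parameter $\delta$, and the final arithmetic — is routine.
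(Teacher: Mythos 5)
Your proof is correct and takes essentially the same route as the paper: fix a single run $A(\gamma)$ with $\gamma$ close to $1$, bound the facility term by $\gamma F^*$ and the close-connection term by $C^*$, and control the fallback term by $3\, f(r)\cdot O(1)$, which vanishes as $r\to\infty$ by Lemma~\ref{technical_lemma}. The only (immaterial) difference is that you bound $D_{max}^{k}$ via the FRLP normalization $c_{n}\le 1$, whereas the paper bounds it intrinsically by $(1+\tfrac{1}{\epsilon})\,C_j^*$ before invoking the same limit.
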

The above theorem easy follows from the following lemmas, because the approximation ratio of $Algorithm~1$ is always upper bounded by the approximation ratio of $A(\gamma)$ for each choice of $\gamma$.

Consider an instance $\mathcal{I}$ and a client $j \in C$. Lemma (\ref{alg_A_connection_cost}) implies that the expected connection cost of $j$ in algorithm $A(\gamma)$ is $E[C_j] \leq \sum_{l = 1}^{n} c_l(j) \cdot \frac{e_1^{k,l}(j)}{r_j} + \frac{e_3^{k}(j)}{r_j} \cdot 3 D_{max}(j)$. Note that the inequality $\sum_{l = 1}^{n} c_l(j) \cdot \frac{e_1^{k,l}(j)}{r_j} \leq C_j^*$ holds, because in the solution $(x^*, y^*)$ client $j$ fractionally uses the same facilities, but with smaller opening values. Therefore, in the expectation he pays not less for connection than in our scaled up solution. Notice that, for a particular choice of $\gamma = 1 + \epsilon$, the value of the expression $3 (1 + \frac{1}{\epsilon})$ is a constant. From \cite{Aravind} we know that the following inequality holds. $$\frac{e_3^{k}(j)}{r_j} \cdot 3 D_{max}(j) \leq f(r) \cdot 3 (1 + \frac{1}{\epsilon}) C_j^*$$

\begin{observation}
 \label{nice_observation}
 For $\gamma = 1 + \epsilon$ the approximation factor for connection cost of the solution produced by $A(\gamma)$ depends only on $f(r)$, where $r$ is the minimum requirement in the considered instance.
\end{observation}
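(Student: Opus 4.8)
The plan is to read the observation off the chain of inequalities assembled in the paragraph immediately preceding it, after summing over all clients.

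First I would fix $\gamma = 1+\epsilon$ and pick the index $k$ with $\gamma_k = 1+\epsilon$, and look at a client $j$ that is not a cluster center. Starting from the bound of Lemma~\ref{alg_A_connection_cost} in the form
\[
E[C_j] \le \sum_{l=1}^{n} c_l(j)\,\frac{e_1^{k,l}(j)}{r_j} + \frac{e_3^k(j)}{r_j}\cdot 3D_{max}(j),
\]
I would treat the two summands separately. For the first, the argument is that under the scaled solution client $j$ spreads its demand over exactly the facilities of $F_j$ with per-facility opening no larger than in $(x^*,y^*)$, so its expected connection cost to the facilities actually opened in $F_j$ is at most $C_j^*$; this is the inequality $\sum_{l} c_l(j) e_1^{k,l}(j)/r_j \le C_j^*$. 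For the second summand, I would use the two facts already isolated before the observation: that $e_3^k(j)/r_j$ equals $f(r_j)$ by the very definitions of $e_3^k$ and of $f$ (with the $h$-estimate of the Poisson Observation plugged in), and that $D_{max}(j)$ is bounded by a $(1+1/\epsilon)$-multiple of the average LP connection cost of $j$, because the distant facilities of $j$ carry $\bar y$-volume $r_j\epsilon$ and lie beyond every close facility, so $C_j^*$ already pays at least $D_{max}(j)$ over that volume. Combining gives the per-client estimate $E[C_j] \le \bigl(1 + 3(1+1/\epsilon)\,f(r_j)\bigr)C_j^*$.

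Next I would discharge the cluster centers: a center $j$ is served only by facilities it itself clustered, and since dependent rounding is applied first to $C_F(j)$, the sum-preservation property guarantees that enough of those facilities open, all within distance $D_{max}^{C}(j)$, which obeys the same $(1+1/\epsilon)$-bound; thus the displayed per-client inequality holds for every $j\in C$. (Alternatively one may simply note that the FRLP, and hence the whole analysis, is driven by the generic non-center client, which is precisely the object of Lemma~\ref{alg_A_connection_cost}.) Finally, since $r_j \ge r$ for every client and either $f$ is non-increasing or, more safely, $f(r_j)\le \sup_{s\ge r}f(s) =: g(r)$ with $g(r)\to 0$ by Lemma~\ref{technical_lemma}, we obtain $E[C_j] \le \bigl(1 + 3(1+1/\epsilon)f(r)\bigr)C_j^*$ for every $j$, and summing over $C$ yields $E[C] \le \bigl(1 + 3(1+1/\epsilon)f(r)\bigr)C^*$. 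With $\epsilon$ held fixed the bracket is a function of $f(r)$ alone, which is exactly the claim.

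The step I expect to be the real obstacle is the second summand: simultaneously recognizing $e_3^k(j)/r_j$ as $f(r_j)$ through the Poisson/negative-correlation estimate for the number of useful openings, and bounding $D_{max}(j)$ by a constant multiple of $j$'s average LP connection cost using the $\bar y$-volume $r_j\epsilon$ of distant facilities. This is the only place where the scaling genuinely matters — the constant $3(1+1/\epsilon)$ blows up as $\epsilon\to 0$ — and it is where one must be careful that ``distant'' is measured with respect to the scaled openings $\bar y$, not $y^*$. The remaining bookkeeping (the first-summand inequality, the treatment of centers, and the passage from $f(r_j)$ to $f(r)$) is routine by comparison.
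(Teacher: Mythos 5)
Your proposal is correct and follows essentially the same route as the paper: it splits the bound of Lemma~\ref{alg_A_connection_cost} into the two summands, bounds the first by $C_j^*$ because the scaled solution only increases openings on the same facilities, and bounds the second by $3(1+1/\epsilon)f(r)\,C_j^*$ by identifying $e_3^k(j)/r_j$ with $f(r_j)$ and bounding $D_{max}(j)$ via the $\epsilon r_j/(1+\epsilon)$ volume of distant facilities. The extra care you take with cluster centers and with passing from $f(r_j)$ to $f(r)$ goes slightly beyond what the paper writes down but does not change the argument.
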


Li Yan showed \cite{liyan_phd} a result similar to the below lemma, but the result is weaker: he shows that the limit is 1 only for a fixed number of facilities.

\begin{lemma}
\label{1_epsilon}
 For each $\epsilon > 0, \gamma = 1 + \epsilon$, there exists $r_0$ such that for each $r \geq r_0$, approximation ratio of an algorithm $A(\gamma)$ is bounded by $1 + \epsilon$.
\end{lemma}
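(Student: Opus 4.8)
The plan is to bound the total cost of the solution produced by $A(\gamma)$ with $\gamma = 1+\epsilon$ by comparing separately against $F^*$ and $C^*$, and then to push the connection-cost overhead to zero using Lemma~\ref{technical_lemma}. First I would split the cost of the rounded solution into the facility-opening part and the connection part. For the facility part, scaling all openings by $\gamma$ and then rounding via dependent rounding (which preserves marginals) gives an expected opening cost of at most $\gamma F^* = (1+\epsilon) F^*$. For the connection part I would invoke Lemma~\ref{alg_A_connection_cost}: for a non-center client $j$,
\begin{equation*}
  E[C_j] \;\le\; \sum_{l=1}^{n} c_l(j)\cdot\frac{e_1^{k,l}(j)}{r_j} \;+\; \frac{e_3^{k}(j)}{r_j}\cdot 3 D_{max}(j).
\end{equation*}
The first summand is at most $C_j^*$, since in $(x^*,y^*)$ the client fractionally routes the same total demand to the same facilities but with smaller opening values, so it pays at least as much there; this is exactly the observation already stated in the text before Observation~\ref{nice_observation}. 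For centers of clusters the argument is easier (or handled by the standard reduction), so the bulk of the estimate concerns non-centers.

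Next I would control the second summand $\frac{e_3^{k}(j)}{r_j}\cdot 3 D_{max}(j)$. By definition $e_3^k(j) = r_j - h(\gamma r_j, r_j)$, so $\frac{e_3^k(j)}{r_j} = f(r_j)$ in the notation of Lemma~\ref{technical_lemma}, with the fixed $\epsilon$ coming from $\gamma = 1+\epsilon$. Using the bound from \cite{Aravind} quoted in the excerpt, $3 D_{max}(j) \le 3(1+\tfrac1\epsilon) C_j^*$, hence
\begin{equation*}
  E[C_j] \;\le\; C_j^* \;+\; f(r_j)\cdot 3\Bigl(1+\tfrac1\epsilon\Bigr) C_j^*.
\end{equation*}
Since $r_j \ge r$ for all clients and $f$ is (eventually) decreasing to $0$ by Lemma~\ref{technical_lemma}, given $\epsilon$ we may pick $r_0$ so large that $f(r)\cdot 3(1+\tfrac1\epsilon) \le \epsilon$ for all $r \ge r_0$. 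Summing over all clients then yields $E[C] \le (1+\epsilon) C^*$. Combining with the facility bound, the expected total cost is at most $(1+\epsilon)(F^* + C^*) = (1+\epsilon) LP^* \le (1+\epsilon) OPT$, and since $Algorithm~1$ returns the best over all tried values of the scaling parameter — in particular one arbitrarily close to $1+\epsilon$ once $n$ is large — its ratio is bounded by $1+\epsilon$ as well. A final derandomization or expectation-to-high-probability remark closes the argument.

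The main obstacle is the interaction between the two occurrences of $\epsilon$: increasing $\gamma = 1+\epsilon$ inflates the facility cost linearly in $\epsilon$ but the connection overhead carries a factor $1+\tfrac1\epsilon$, which \emph{blows up} as $\epsilon \to 0$. The resolution is that $f(r)$ decays fast enough in $r$ (Lemma~\ref{technical_lemma} shows it is dominated by $\tfrac{(1+\epsilon)^{r-1} r^r}{(r-1)!\,e^{(1+\epsilon)r}} \to 0$) that for \emph{fixed} $\epsilon$ the product $f(r)\cdot 3(1+\tfrac1\epsilon)$ can be driven below $\epsilon$ by taking $r$ large — the order of quantifiers ($\epsilon$ first, then $r_0$) is exactly what makes this work, and it is what distinguishes this from Li Yan's weaker statement in \cite{liyan_phd}. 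A secondary point needing care is that $D_{max}(j)$ here refers to the radius of the full close-facility set under scaling $\gamma$, so the bound $3D_{max}(j) \le 3(1+\tfrac1\epsilon)C_j^*$ must be applied with the scaled volumes; this is routine given the partition set up in the Scaling subsection, but it is where one must be precise about which volume ($r_j$ vs.\ $\gamma r_j$) each quantity refers to.
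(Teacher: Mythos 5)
Your proposal is correct and follows essentially the same route as the paper: the paper's own (one-line) proof simply combines Lemma~\ref{technical_lemma} with the preceding discussion, which contains exactly your two bounds --- $\sum_l c_l(j)\frac{e_1^{k,l}(j)}{r_j} \leq C_j^*$ and $\frac{e_3^k(j)}{r_j}\cdot 3D_{max}(j) \leq f(r)\cdot 3(1+\frac{1}{\epsilon})C_j^*$ --- together with the observation that for fixed $\epsilon$ the constant $3(1+\frac{1}{\epsilon})$ is absorbed by $f(r)\to 0$. You have merely written out more explicitly what the paper leaves implicit (the facility-cost bound $\gamma F^*$ and the final summation over clients), which is a faithful reconstruction rather than a different argument.
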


\begin{proof}
 Lemma~\ref{technical_lemma} and Observation~\ref{nice_observation} imply that for each choice of $\epsilon$ there exists $r_0$ such that for each instance with minimum requirement $r \geq r_0$ approximation ratio of $A(1+\epsilon)$ is bounded by $1 + \epsilon$.
 \qed
\end{proof}

\subsection{Dealing with large requirements $r_j$}
Yan and Chrobak proved the following theorem

\begin{theorem} (from \cite{Yan})
\label{requirements_reduction}
 Suppose that there is a polynomial-time algorithm $\mathcal{A}$ that, for any instance of $FTFP$ with maximum demand bounded by $|F|$, computes an integral solution that approximates the fractional optimum of this instance within factor $\rho > 1$. Then there is a $\rho$-approximation algorithm $\mathcal{A'}$ for $FTFP$.
\end{theorem}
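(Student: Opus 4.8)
The plan is to peel off the integral part of an optimal fractional solution for free, invoke $\mathcal{A}$ on the leftover fractional part (which, as it turns out, automatically has bounded demands), and glue the two integral solutions together.

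First I would solve the LP relaxation of the given instance $\mathcal{I}$ and obtain an optimal \emph{complete} fractional solution $(x^*, y^*)$ of cost $LP^* = F^* + C^*$; recall completeness means $x^*_{ij} \in \{0, y^*_i\}$ for all $i,j$, and this is exactly the property that makes the peeling well-defined on the connection variables. Define a partial integral solution $S_0$ that opens $\lfloor y^*_i \rfloor$ copies of every facility $i$ and connects every client $j$ to all these copies of every facility in $F_j$ (the support of $j$'s flow). Since $\lfloor y^*_i \rfloor \le y^*_i$ and $x^*_{ij} = y^*_i$ for $i \in F_j$, the cost of $S_0$ is at most $\sum_i \lfloor y^*_i \rfloor f_i + \sum_j \sum_{i \in F_j} \lfloor y^*_i \rfloor c_{ij} \le F^* + C^* = LP^*$; call it $L_0 := \mathrm{cost}(S_0) \ge 0$.

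Next I would build the residual instance $\mathcal{I}'$: the same facilities, opening costs and metric, and the same clients except that we drop those whose demand drops to $0$, with new demand $r'_j := r_j - \sum_{i \in F_j} \lfloor y^*_i \rfloor$ (a non-negative integer, since $r_j$ and each $\lfloor y^*_i \rfloor$ are integers). The pair $(x',y')$ given by $y'_i := y^*_i - \lfloor y^*_i \rfloor$ and $x'_{ij} := x^*_{ij} - \lfloor y^*_i \rfloor$ for $i \in F_j$ (and $0$ otherwise) is a feasible fractional solution for $\mathcal{I}'$: completeness gives $x'_{ij} = y'_i$ for $i \in F_j$, so the coupling and non-negativity constraints hold, and $\sum_{i \in F_j} x'_{ij} = \sum_{i \in F_j}(y^*_i - \lfloor y^*_i \rfloor) = r_j - \sum_{i \in F_j} \lfloor y^*_i \rfloor = r'_j$. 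Its cost is exactly $LP^* - L_0$. The crucial point is that all demands of $\mathcal{I}'$ are bounded by $|F|$: each $r'_j = \sum_{i \in F_j}(y^*_i - \lfloor y^*_i \rfloor)$ is a sum of $|F_j| \le |F|$ numbers, each lying in $[0,1)$, and $r'_j$ is a non-negative integer, hence $r'_j \le |F| - 1$.

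Finally I would run $\mathcal{A}$ on $\mathcal{I}'$, obtaining an integral solution $S_1$ with $\mathrm{cost}(S_1) \le \rho \cdot (\text{fractional optimum of } \mathcal{I}') \le \rho\,(LP^* - L_0)$ since $(x',y')$ is feasible for $\mathcal{I}'$, and output $\mathcal{A}' := S_0 \cup S_1$: in each location open the copies of $S_0$ together with those of $S_1$, and connect each client $j$ to its $\sum_{i \in F_j}\lfloor y^*_i \rfloor$ copies served by $S_0$ plus its $r'_j$ copies served by $S_1$, for a total of $r_j$. The total cost is
\[
L_0 + \mathrm{cost}(S_1) \le L_0 + \rho(LP^* - L_0) = \rho\,LP^* - (\rho-1)L_0 \le \rho\,LP^* \le \rho\cdot OPT,
\]
using $\rho > 1$, $L_0 \ge 0$, and $LP^* \le OPT$. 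Running time is polynomial, because the original LP has size polynomial in the input even when demands are encoded in binary, and $\mathcal{I}'$ has at most $|F|$ facilities, at most $|C|$ clients, and demands at most $|F|$, so $\mathcal{A}$ runs in polynomial time on it. The step I expect to require the most care is the legality of the gluing $S_0 \cup S_1$: it is feasible precisely because $FTFP$ permits several openings at one location, so the two batches of copies never collide — this is the one place where the argument genuinely exploits that we are in $FTFP$ and not $FTFL$, and it (together with verifying the $r'_j \le |F|$ bound) is where the proof should be stated carefully; the remaining cost bookkeeping is routine.
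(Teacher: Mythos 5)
Your proof is correct and takes essentially the approach this result rests on: the paper itself imports Theorem~\ref{requirements_reduction} from \cite{Yan} without proof, and its own lemma in the following subsection generalizes exactly your decomposition (peeling off $\lfloor y_i^*\rfloor$, here replaced by $\max\{\lfloor y_i^* - \bar r\rfloor,0\}$), with the same completeness-based feasibility check, the same $r'_j < |F|$ argument, and the same cost bookkeeping $L_0 + \rho(LP^* - L_0) \leq \rho\, LP^*$.
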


The main result of this section is an extension of Theorem \ref{requirements_reduction} which exploits our Theorem \ref{apx_r_relation}. Consider an instance $\mathcal{I}$ for which the approximation ratio of $Algorithm~1$ is almost one, see Theorem \ref{apx_r_relation}. As it was mentioned in Section \ref{the_lp_section} we can assume that the optimal solution $(x^*, y^*)$ to the LP (\ref{lp_ufl:goal}) - (\ref{lp_ufl:non_negative}) for an instance $\mathcal{I}$ is complete, so for each $i \in F$ and $j \in C$ we have $x_{ij}^* \in \{0, y_i^*\}$ . From optimality of this solution, we can assume that $\sum_{i \in F}x_{ij}^* = r_j$ for all $j \in C$. We split solution $(x^*, y^*)$ into two parts $(x^*, y^*) = (\hat{x}, \hat{y}) + (\dot x, \dot y)$, where 
$$\hat{y}_i = max\{\floor{y_i^* - \bar r}, 0\},~~~\hat{x}_{ij} = max\{\floor{x_{ij}^* - \bar r}, 0\}~~\forall j \in C, i \in F$$
$$\dot y_i = y_i^* - \hat{y}_i,~~\dot x_{ij} = x_{ij}^* - \hat{x}_{ij}~~\forall j \in C, i \in F$$
where $1 \leq \bar r \leq min_{j \in C}~r_j$. Now we will construct two instances $\mathcal{\dot I}$ and $\hat{\mathcal{I}}$ of $FTFP$ with the same parameters as $\mathcal{I}$, except requirements. Demand of each client $j$ is $\hat{r}_j = \sum_{i \in F} \hat{x}_{ij}$ in the instance $\hat{\mathcal{I}}$ and $\dot r_j = \sum_{i \in F} \dot x_{ij}$ in $\mathcal{\dot I}$.

\begin{lemma}
 (i) $(\hat{x}, \hat{y})$ is a feasible integral solution to instance $\hat{\mathcal{I}}$ \\
 (ii) $(\dot x, \dot y)$ is a feasible fractional solution to instance $\mathcal{\dot I}$ \\
 (iii) $(\hat{x}, \hat{y})$ and $(\dot x, \dot y)$ are optimal solutions to $\mathcal{\hat I}$ and $\mathcal{\dot I}$ \\
 (iv) $\forall_{j \in C}~ (\bar r + 1) \cdot |F| \geq \dot r_j \geq \bar r$
\end{lemma}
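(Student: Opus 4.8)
The plan is to verify the four items essentially independently: (i), (ii) and the upper bound in (iv) are direct checks against the definition of the split, while (iii) drops out of a short two-sided inequality on LP optima once (i) and (ii) are in hand. Throughout I would use the standing completeness assumption $x_{ij}^*\in\{0,y_i^*\}$ and the optimality fact already recorded, $\sum_{i\in F}x_{ij}^*=r_j$. Note that $\lfloor y_i^*-\bar r\rfloor$ and $\lfloor x_{ij}^*-\bar r\rfloor$ are integers (floor of a real is an integer), so $\hat y_i,\hat x_{ij}$ are integral for free, which is most of (i).

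For (i) and (ii), the requirement constraints $\sum_i\hat x_{ij}\ge\hat r_j$ and $\sum_i\dot x_{ij}\ge\dot r_j$ hold with equality by construction, since $\hat r_j$ and $\dot r_j$ are \emph{defined} to be these sums; non-negativity of all four families is immediate from $\lfloor t-\bar r\rfloor\le t$ with $t\in\{y_i^*,x_{ij}^*\}$. The only real check is the ``open enough'' constraint $y-x\ge 0$. For (i) this is monotonicity of the floor: $x_{ij}^*\le y_i^*$ gives $\lfloor x_{ij}^*-\bar r\rfloor\le\lfloor y_i^*-\bar r\rfloor$, hence $\hat x_{ij}\le\hat y_i$. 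For (ii) I would split on completeness: if $x_{ij}^*=y_i^*$ then $\hat x_{ij}=\hat y_i$ and $\dot y_i-\dot x_{ij}=0$; if $x_{ij}^*=0$ then $\hat x_{ij}=0$ and $\dot y_i-\dot x_{ij}=\dot y_i=y_i^*-\hat y_i\ge 0$. This is the single place the completeness assumption is genuinely needed — for a non-complete $(x^*,y^*)$ the naive split can violate $y\ge x$.

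For (iii) I would argue as follows. Since $\hat r_j+\dot r_j=\sum_i(\hat x_{ij}+\dot x_{ij})=\sum_i x_{ij}^*=r_j$, summing any feasible fractional solution of $\hat{\mathcal I}$ with any feasible fractional solution of $\dot{\mathcal I}$ yields a feasible fractional solution of $\mathcal I$ (the requirement bounds add, the $y-x\ge 0$ constraints add, the objective is linear), so $LP^*(\mathcal I)\le LP^*(\hat{\mathcal I})+LP^*(\dot{\mathcal I})$. Conversely, by (i) and (ii) the pairs $(\hat x,\hat y)$ and $(\dot x,\dot y)$ are feasible for $\hat{\mathcal I}$ and $\dot{\mathcal I}$, so $LP^*(\hat{\mathcal I})+LP^*(\dot{\mathcal I})\le\mathrm{cost}(\hat x,\hat y)+\mathrm{cost}(\dot x,\dot y)=\mathrm{cost}(x^*,y^*)=LP^*(\mathcal I)$, where the middle equality uses linearity of the objective and that $\hat{\mathcal I},\dot{\mathcal I}$ share the costs $c,f$ of $\mathcal I$. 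Chaining the two displays, every inequality is tight, forcing $\mathrm{cost}(\hat x,\hat y)=LP^*(\hat{\mathcal I})$ and $\mathrm{cost}(\dot x,\dot y)=LP^*(\dot{\mathcal I})$; in particular $(\hat x,\hat y)$, being integral, is also an optimal integral solution of $\hat{\mathcal I}$.

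Finally, for (iv) I would analyse $\dot x_{ij}=x_{ij}^*-\max\{\lfloor x_{ij}^*-\bar r\rfloor,0\}$ in two cases. If $x_{ij}^*\le\bar r$ then $\lfloor x_{ij}^*-\bar r\rfloor\le 0$, so $\dot x_{ij}=x_{ij}^*\le\bar r$; if $x_{ij}^*>\bar r$ then $\lfloor x_{ij}^*-\bar r\rfloor\ge 0$, so $\dot x_{ij}=x_{ij}^*-\lfloor x_{ij}^*-\bar r\rfloor\in[\bar r,\bar r+1)$ because $x_{ij}^*-\bar r-1<\lfloor x_{ij}^*-\bar r\rfloor\le x_{ij}^*-\bar r$. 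Either way $0\le\dot x_{ij}\le\bar r+1$, so $\dot r_j=\sum_{i\in F}\dot x_{ij}\le(\bar r+1)|F|$. For the lower bound the point is the dichotomy: either some facility $i$ has $x_{ij}^*>\bar r$, whence $\dot r_j\ge\dot x_{ij}\ge\bar r$ from that single non-negative term, or every facility has $x_{ij}^*\le\bar r$, whence $\dot x_{ij}=x_{ij}^*$ for all $i$ and $\dot r_j=\sum_i x_{ij}^*=r_j\ge\bar r$ using $\bar r\le\min_{j\in C}r_j$. The conceptual heart is the two-sided argument in (iii); the only checks that need a moment's care are the reliance on completeness in (ii) and this last dichotomy in (iv) — the rest is bookkeeping on floors and linearity of the objective.
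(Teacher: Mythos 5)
Your proof is correct and follows essentially the same route as the paper: completeness handles the $y-x\ge 0$ constraint in (ii), an exchange argument gives optimality in (iii), and floor estimates give (iv). The only differences are presentational — you make explicit the two-sided inequality $LP^*(\mathcal I)\le LP^*(\hat{\mathcal I})+LP^*(\dot{\mathcal I})\le \mathrm{cost}(x^*,y^*)$ where the paper merely asserts a contradiction, and your per-facility bound $\dot x_{ij}\le\bar r+1$ is a cleaner bookkeeping of the same estimate the paper derives via the sets $F'$.
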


\begin{proof}
 (i) For a feasibility of $(\hat{x}, \hat{y})$, we need to show that all constraints of LP (\ref{lp_ufl:goal}) - (\ref{lp_ufl:non_negative}) are satisfied. For each $j \in C$ we have that $\hat{r}_j = \sum_{i \in F} \hat{x}_{ij}$, so (\ref{lp_ufl:r_satisfy}) holds. Solution $(x^*, y^*)$ is complete, so $x_{ij}^* \in \{0, y_i^*\}$. If $x_{ij}^* = 0$ then $\bar{x}_{ij} = 0 \leq \bar{y}_i$, otherwise $x_{ij}^* = {y}_i^* > 0$ in that case we have that $\hat{x}_{ij} = \hat{y}_{i}$. In consequence constraint (\ref{lp_ufl:open_enough}) is satisfied.
 
 (ii) In the case of $(\dot x, \dot y)$ also all inequalities hold. Constraint (\ref{lp_ufl:r_satisfy}) is satisfied, because $\dot r_j = \sum_{i \in F} \dot x_{ij}$. Note that both $\dot x_{ij}$ and $\dot y_i$ are non-negative. We need to show that $\dot x_{ij} \leq \dot y_i$ which is equivalent with $y_i^* - max\{\floor{y_i^* - \bar r}, 0\} \geq x_{ij}^* - max\{\floor{x_{ij}^* - \bar r}, 0\}$. If $x_{ij}^* = 0$ then we have $y_i^* \geq max\{\floor{y_i^* - \bar r}, 0\}$ which holds. In the other case we have $x_{ij}^* = y_i^* > 0$. With that assumption we trivially obtain the following equality $y_i^* - max\{\floor{y_i^* - \bar r}, 0\} = x_{ij}^* - max\{\floor{x_{ij}^* - \bar r}, 0\}$.
 
 (iii) Suppose that at least one of $(\hat{x}, \hat{y})$ and $(\dot x, \dot y)$ is not an optimal solution to $\mathcal{\hat I}$ and $\mathcal{\dot I}$, respectively. In that situation we are able to obtain solution to instance $\mathcal{I}$ with a smaller cost than $cost(x^*, y^*)$, which is a contradiction.
 
 (iv) To prove $\dot r_j \leq (\bar r + 1) \cdot |F|$ we have to show that the following inequality holds,  where $F' = \{i \in F ~|~ x_{ij}^* \geq \bar r + 1\}$. $$r_j - \sum_{i \in F'} (x_{ij}^* - (\bar r + 1)) \leq (\bar r + 1)|F| \iff \sum_{i \in F \setminus F'} x_{ij}^* \leq (\bar r + 1) |F \setminus F'|$$
 To finish the proof of the lemma we should prove the following inequalities
 $$r_j - \sum_{i \in F} max\{\floor{x_{ij}^* - \bar r}, 0\} \geq r_j - \sum_{i \in F} max\{x_{ij}^* - \bar r, 0\} \geq \bar r$$
 Let $F' = \{i \in F | x_{ij}^* > \bar r\}$. Using $F'$ we can rewrite the above inequality as $r_j - \sum_{i \in F'} (x_{ij}^* - \bar r) \geq \bar r$. Consider two cases: $|F'| = 0$ and $|F'| \geq 1$. The first is trivial because $r_j \geq \bar r$ holds. In the second case $r_j - \sum_{i \in F'} (x_{ij}^* - \bar r) \geq r_j + \bar r - \sum_{i \in F}x_{ij}^* \geq \bar r$, which trivially holds, because $r_j = \sum_{i \in F}x_{ij}^*$.
\qed
\end{proof}

\begin{corollary}
 For an instance $\mathcal{I}$, with requirement $r$, for which the approximation ratio of $Algorithm~1$ is $\lambda_r$, we can obtain two other instances: integral $\mathcal{\hat I}$ and fractional $\mathcal{\dot I}$. Instance $\mathcal{\dot I}$ has polynomial demands and a minimum requirement $1 \leq \bar r \leq min_{j \in C}~r_j$. The approximation ratio $\lambda$ and a running time of $Algorithm~1$ depends on value of $\bar r$, which can be arbitrarily selected from $[1,r] $. Sum of the integral solution $\mathcal{S}$ and the optimal integral solution for $\mathcal{\hat I}$, is a feasible integral solution for $\mathcal{I}$ with the approximation ratio $\lambda$.
\end{corollary}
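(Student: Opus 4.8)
The corollary is essentially a repackaging of the preceding lemma together with the analysis of Section~4, so the plan is to exhibit the reduction explicitly and then check that the two guarantees compose. Fix the instance $\mathcal{I}$ with minimum requirement $r$ and fix any integer $\bar r$ with $1 \le \bar r \le r$ that is bounded by a polynomial in $|F|$ (for instance $\bar r = \min(r, |F|)$, or a constant $\bar r$ if one only aims at a constant-factor guarantee). Take an optimal complete LP solution $(x^*, y^*)$ of $\mathcal{I}$, which we may assume satisfies $\sum_{i \in F} x^*_{ij} = r_j$ for all $j$, split it as $(x^*, y^*) = (\hat x, \hat y) + (\dot x, \dot y)$ exactly as in the lemma, and form the two $FTFP$ instances $\mathcal{\hat I}$, $\mathcal{\dot I}$ with requirements $\hat r_j = \sum_{i} \hat x_{ij}$ and $\dot r_j = \sum_{i} \dot x_{ij}$. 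By parts (i) and (iii) of the lemma, $(\hat x, \hat y)$ is already an optimal integral solution of $\mathcal{\hat I}$, so there is nothing to compute for that instance; by parts (ii)--(iii), $(\dot x, \dot y)$ is an optimal, hence LP-optimal, fractional solution of $\mathcal{\dot I}$, so $LP^*(\mathcal{\dot I}) = cost(\dot x, \dot y)$.

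Next I would run $Algorithm~1$ on $\mathcal{\dot I}$. By part (iv) of the lemma the demands of $\mathcal{\dot I}$ lie in the interval $[\bar r, (\bar r + 1)|F|]$, hence they are polynomially bounded, so $Algorithm~1$ runs in polynomial time, with a running time that grows with $\bar r$. Since the minimum requirement of $\mathcal{\dot I}$ is at least $\bar r$, the analysis of Section~4, namely Lemma~\ref{alg_A_connection_cost} together with the factor-revealing LP and in particular its crucial constraint (\ref{total_lp:t_upperbound}) (which is phrased in terms of the minimum requirement), shows that the integral solution $\mathcal{S}$ returned satisfies $cost(\mathcal{S}) \le \lambda \cdot LP^*(\mathcal{\dot I}) = \lambda \cdot cost(\dot x, \dot y)$ with $\lambda = \lambda_{\bar r}$. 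Because $\lambda_r$ is non-increasing in $r$ (Section~5, cf.~Theorem~\ref{apx_r_relation}), a larger choice of $\bar r$ gives a smaller $\lambda$ at the expense of larger demands and running time, which is precisely the claimed dependence of $\lambda$ and of the running time on $\bar r$.

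Finally I would glue the two solutions together: open at each location $i \in F$ the sum of the copies opened by $\mathcal{S}$ and by $(\hat y)$, and let each client $j$ keep both its $\dot r_j$ connections from $\mathcal{S}$ and its $\hat r_j$ connections prescribed by $\hat x$. Feasibility for $\mathcal{I}$ holds because every client $j$ then receives $\dot r_j + \hat r_j = \sum_{i \in F} (\dot x_{ij} + \hat x_{ij}) = \sum_{i \in F} x^*_{ij} = r_j$ connections, and at every location the number of opened copies is by construction at least $\hat x_{ij}$ and at least the number of copies used there by $\mathcal{S}$ for each $j$, hence at least the sum. For the cost, additivity of the costs of the two parts together with $\lambda \ge 1$ gives
\[
cost(\mathcal{S}) + cost(\hat x, \hat y)
\;\le\; \lambda\, cost(\dot x, \dot y) + cost(\hat x, \hat y)
\;\le\; \lambda \bigl( cost(\dot x, \dot y) + cost(\hat x, \hat y) \bigr)
\;=\; \lambda\, cost(x^*, y^*)
\;=\; \lambda\, LP^*(\mathcal{I})
\;\le\; \lambda \cdot OPT,
\]
so the combined solution is a feasible integral $\lambda$-approximation for $\mathcal{I}$.

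The feasibility and cost bookkeeping above are routine; the point I expect to need the most care, and the only real obstacle, is verifying that the Section~4 analysis transfers verbatim to $\mathcal{\dot I}$ even though its requirements may range up to $(\bar r + 1)|F|$, i.e.\ that the per-client bound of Lemma~\ref{alg_A_connection_cost} and the $FRLP$ bound really depend only on the global minimum requirement, and that the negative correlation and sum-preservation properties of the dependent rounding of \cite{Aravind} are untouched when we pass to the rescaled instance. This is also what makes the corollary a strict strengthening of Theorem~\ref{requirements_reduction} of \cite{Yan}: the reduction to polynomially bounded demands is carried out without destroying the lower bound $\bar r$ on the minimum requirement, so the requirement-dependent ratio $\lambda_{\bar r}$ survives.
\qed
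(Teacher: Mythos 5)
Your proof is correct and follows exactly the route the paper intends: the corollary is stated without its own proof as an immediate consequence of the preceding lemma, and your write-up fills in precisely that composition argument (run $Algorithm~1$ on $\mathcal{\dot I}$, keep $(\hat x,\hat y)$ as is, add the two solutions, and use $cost(\mathcal{S}) \le \lambda\, cost(\dot x,\dot y)$ together with $\lambda \ge 1$ and additivity), mirroring the structure of Theorem~\ref{requirements_reduction} from \cite{Yan} while preserving the lower bound $\bar r$ on the minimum requirement. The caveats you flag (polynomial boundedness of $\bar r$, and that the Section~4 analysis depends only on the minimum requirement) are exactly the right ones and are consistent with the paper's claims.
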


\section{Integrality gap of the LP}
\label{int_gap_section}
We will present examples of instances of $FTFP$, with an uniform requirement r of all clients, which has a quite big integrality gap. Let $I(n, l, f_c)$ denote an instance of the $FTFP$ defined as follows. There is a set of facilities $F = \{1, 2, \dots, n \cdot r\}$ (each of $n$ facilities has $r$ copies to avoid opening any of them more than once, in a consequence our results for an integrality gap holds also for $FTFL$) The set of clients $C = \{j \subset F : |j| = m \cdot r = l, m \in \mathbb{N} \setminus \{0\}\}$ constructed from subsets of $F$ such that each of $m$ selected facilities is in $r$ copies. A connection cost $c_{ij}$ is equal $\frac{1}{|C|}$ if $i \in j$ and $\frac{3}{|C|}$ otherwise. Instances described above have the following properties:

\begin{itemize}
\renewcommand{\labelitemi}{$\bullet$}
 \item The optimal fractional solution to the linear relaxation of $FTFP$ for these instances is: open $\frac{r}{l}$ of each facility, connect each client $j$ to all facilities whose distance is $\frac{1}{|C|}$
 \item The cost of the optimal fractional solution to the $LP$-relaxation of $FTFP$ on these instances is $z_{LP}(I(n, l, f_c)) = r(1 + \frac{f_c}{l})$ 
 \item The cost of an integral solution to the $FTFP$ depends only on number of open facilities and requirement of clients.
\end{itemize}

Now we will analyse the cost of $z(I(n, l, f_c))$, optimal solutions to instances of the form $I(n, l, f_c)$. We consider the instance obtained by setting $l$ and $f_c$ to some constant values and $n = 1, 2, 3, \dots$. We consider all solutions by modifying the parameter $\alpha \in (0, 1]$ which is a fraction of open facilities. The cost of the integral solution for such instances can be described by the following expression: $$\lim_{n \mapsto \infty}z(I(n, l, f_c)) = \alpha \cdot f_c + r + \sum_{i = 0}^{r-1} \binom{l}{i} \cdot \alpha^i \cdot (1 - \alpha)^{l-i} \cdot (r - i) \cdot 2$$ We open $\ceil{\alpha \cdot n}$ facilities of total cost $\alpha \cdot f_c$. Each client has a requirement $r$ and $\binom{l}{i} \cdot \alpha^i \cdot (1 - \alpha)^{l-i}$ represents the following event: $i$ facilities at distance $\frac{1}{|C|}$ are open. The remaining $r - i$  facilities which $j$ uses are in a distance $\frac{3}{|C|}$ from $j$, so he has to pay extra for a connection. The integrality gap for a particular 
values of requirement $r$, which we are able to find, are presented in the below table.

\begin{center}
  \begin{tabular}{ c | c | c | c | c | c | c | c | c | c | c }
    $r$ & 1 & 2 & 3 & 4 & 5 & 6 & 7 & 8 & 9 & 10\\ \hline
    IG & 1.4627 & 1.3268 & 1.2655 & 1.2289 & 1.1709 & 1.1179 & 1.076 & 1.0466 & 1.0268 & 1.0146 \\
  \end{tabular}
\end{center}

The following table presents values of parameters used to obtain our integrality gap instances. All values of the variables in the below table were obtained in experimental way.
\begin{center}
  \begin{tabular}{ c | c | c | c | c }
    $r$ & $IG$ & $\alpha$ & $f$ & $l$ \\ \hline
    1  & 1.46272 & 0.001462 & 463.495 & 1000 \\ \hline
    2  & 1.32689 & 0.002654 & 514.615 & 1000 \\ \hline
    3  & 1.26557 & 0.003796 & 539.050 & 1000 \\ \hline
    4  & 1.22895 & 0.004916 & 554.235 & 1000 \\ \hline 
    5  & 1.17098 & 0.005855 & 526.635 & 1000 \\ \hline
    6  & 1.11795 & 0.006707 & 452.550 & 1000 \\ \hline
    7  & 1.07640 & 0.007535 & 356.055 & 1000 \\ \hline
    8  & 1.04669 & 0.008374 & 257.735 & 1000 \\ \hline
    9  & 1.02687 & 0.009242 & 172.485 & 1000 \\ \hline
    10 & 1.01460 & 0.010146 & 107.175 & 1000 \\
  \end{tabular}
\end{center}

\section{Lower bound for FTFL}
We give a reduction from the Set Cover problem. Consider an instance of Set Cover defined as $X = \{x_1, x_2, \dots x_n\}$, and $S = \{S_1, S_2, \dots S_m\}$ such that $S_i \subseteq X$ for each $i \in \{1, 2, \dots m\}$. We would like to find a cover $C \subseteq S$ such that $|C| = k$ is minimized. In our reduction we assume that we know $k$ (we can run algorithm for each value of $1 \leq k \leq m$).

\begin{theorem}
\label{hardness_of_FTFL}
If for any $r > 1$ there is a polynomial time algorithm with an approximation factor smaller than $1.278$ for the Fault-Tolerant Facility Location problem for instances with minimal requirement $r$, then $NP \subseteq DTIME[n^{O(log~log~n)}]$.
\end{theorem}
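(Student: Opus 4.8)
The plan is to reduce from Set Cover and to convert Feige's $\ln n$-inapproximability into a constant-factor gap, in the spirit of the Guha--Khuller/Sviridenko proof of the $1.463$-hardness of $UFL$, but engineered so that the minimum requirement equals a prescribed $r>1$. I will use Feige's theorem in the form: unless $NP\subseteq DTIME[n^{O(\log\log n)}]$, for every $\varepsilon>0$ no polynomial-time procedure distinguishes Set Cover instances $(X,S)$ with $|X|=n$ that admit a cover of size $k$ from those in which every cover has size at least $(1-\varepsilon)k\ln n$; moreover the hard instances may be taken regular and to satisfy the partition-system property that any $\beta k$ of the sets cover at most a $(1-e^{-\beta}+\varepsilon)$-fraction of $X$. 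Given such an instance together with the guessed value $k$, I build an $FTFL$ instance as follows: at the location of every set $S_i$ place a bundle of $r$ co-located facilities, each of opening cost $f$ (a parameter fixed at the end); create, for each $x_j\in X$, one client of requirement exactly $r$ (so the minimum requirement is $r$); and let the connection cost from that client to a facility in the bundle of $S_i$ be $1$ if $x_j\in S_i$ and $3$ otherwise. The routine checks are: the distances are realised by the shortest-path metric of the client--facility incidence graph (unit-length edges whenever $x_j\in S_i$) truncated at $3$, so the triangle inequality holds and ``far'' distances are exactly $3$; no location is opened more than once by construction; every client has at least $r$ facilities at distance $1$; and, $r$ being a fixed constant, the reduction is polynomial.

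Next I bound the optimum in the two cases. In the YES case, opening all $r$ copies in each of the $k$ cover bundles and routing every client to the $r$ copies of one bundle covering it is feasible and costs $rkf+nr$, hence $OPT\le rkf+nr$. In the NO case, consider an arbitrary feasible solution, let $o_i\in\{0,\dots,r\}$ be the number of copies it opens in bundle $i$ and $d_j=\sum_{i:\,x_j\in S_i}o_i$; its facility cost is $f\sum_i o_i$ and its connection cost is $\sum_j\bigl(r+2(r-d_j)^{+}\bigr)=nr+2\sum_j(r-d_j)^{+}$. The core of the argument is to show that in a NO instance these two quantities cannot both be small: for any opening budget $V=\sum_i o_i$, regularity together with the partition-system property must force the coverage profile $(d_j)_j$ to be dominated, in a convex-order sense (so that $\sum_j(r-d_j)^{+}$ can only grow), by a Poisson-type profile of mean $\mu=\Theta(V/k)$, whence $\sum_j(r-d_j)^{+}\ge n\,(r-h(\mu,r))-o(n)$, with $h$ the Poisson lower bound from the Observation preceding Lemma~\ref{alg_A_connection_cost}. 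Thus $OPT\ge\min_{V\ge0}\bigl(fV+nr+2n\,(r-h(\Theta(V/k),r))\bigr)$. Here the freedom -- absent in plain $UFL$ -- to assemble a client's $r$ cheap connections out of several partially opened bundles rather than one fully opened one is what lets the adversary make $r-h(\cdot,r)$ decay faster than in the $r=1$ case, and is responsible for the eventual constant lying strictly below $1.463$.

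Finally I choose $f$ to be the appropriate multiple of $nr/k$, so that the YES upper bound and the NO lower bound both equal $nr$ times an expression in one free parameter ($\alpha$, controlling $f$) and one adversarial parameter ($\beta$, with $V=\beta k$); the resulting $\max_{\alpha}\min_{\beta}$ is a one-variable optimisation, which I evaluate as in the factor-revealing computations of Section~\ref{lp_factor_revel}, and which is at least $1.278$ for every fixed $r\ge 2$ (and, unlike $\lambda_r$, does not tend to $1$ as $r\to\infty$). Consequently, if some polynomial-time algorithm solved $FTFL$ with minimum requirement $r$ within a factor $\rho<1.278$, running it on the constructed instance and testing whether the returned cost is below $\rho\cdot(rkf+nr)$ would decide the original Set Cover instance, contradicting Feige's theorem unless $NP\subseteq DTIME[n^{O(\log\log n)}]$. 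The step I expect to be the main obstacle is exactly the NO-case estimate $\sum_j(r-d_j)^{+}\ge n\,(r-h(\Theta(V/k),r))-o(n)$: it must hold uniformly over every integral vector of openings, full and partial bundles alike, and its sharp form is what pins down the constant $1.278$; establishing it requires a careful averaging argument over the partition-system structure of Feige's instances.
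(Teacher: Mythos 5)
Your construction differs fundamentally from the paper's, and unfortunately it cannot establish the theorem as stated. You make every one of a client's $r$ connections go to cost-$f$ set-bundles, so your NO-case penalty is governed by the Poisson deficiency $r-h(\mu,r)$ with $\mu=\Theta(V/k)$. But by the paper's own Lemma~\ref{technical_lemma}, $\frac{r-h((1+\epsilon)r,r)}{r}\to 0$ as $r\to\infty$; concretely, a cheating solution that spreads $(1+\epsilon)rk$ unit openings one-per-set over many sets makes almost every client accumulate $d_j\ge r$ cheap connections, so the NO cost drops to about $nr\bigl((1+\epsilon)\gamma+1+o(1)\bigr)$ against a YES cost of $nr(\gamma+1)$, and the gap tends to $1$. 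Your instance is essentially the paper's integrality-gap family $I(n,l,f_c)$ of Section~\ref{int_gap_section}, whose gap provably decays $1.46,\,1.33,\,1.27,\dots,1.01$ as $r$ grows --- so your claim that the $\max_\alpha\min_\beta$ value stays at least $1.278$ and ``does not tend to $1$'' is false, independently of the Poisson-domination estimate that you yourself flag as unproven (and which is the entire technical content of your NO-case bound).

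The paper avoids this collapse by neutralizing the requirement $r$ rather than amplifying it: for each element $x_i$ it adds $r-1$ singleton facilities $R_i$ of \emph{zero} opening cost at distance $0$ from client $x_i$ (and distance $2$ from every other client). Each client then satisfies $r-1$ of its demands for free, and only one connection per client is contested, so the analysis is a direct Guha--Khuller-style iterative reduction with a single-connection accounting, $\alpha\ge\frac{\gamma\beta_j+2-c_j}{1+\gamma}$. The constant drops from $1.463$ to $1.278$ precisely because an uncovered client can connect its last demand to some other client's free singleton at distance $2$ instead of paying $3$ to a far set-facility; the bound is then uniform in $r>1$ by construction. If you want to salvage your approach, you would need an additional gadget playing the role of the $R_i$'s; without it, the statement you can hope to prove is a hardness bound that decreases with $r$, which is strictly weaker than Theorem~\ref{hardness_of_FTFL}.
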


\begin{proof}
 We transform a given instance of the Set Cover $(X, S)$ to an instance of $FTFL$ with uniform requirements $r > 1$. Suppose that we know that an optimal solution to $(X, S)$ uses $k$ sets. Define $R_i$ as a multi-set of singletons $\{x_i\}$ for each $x_i \in X$, ($R_i = \{\{x_i\}, \{x_i\}, \dots, \{x_i\}\}$ and $|R_i| = r - 1$). We set $C = X$ and $F = S \cup R$, where $R = \bigcup_{i = 1}^{n} R_i$. If $x_i \in S_j$ then $c_{ij} = 1$ and $c_{ij} = 3$ otherwise. Similarly if $x_{i'} \in R_{j'}$ then $c_{i'j'} = 0$ and $c_{i'j'} = 2$ otherwise. We extend a distance function $c_{ij}$ by the shortest paths to obtain a metric instance. 
 
 The value of $\gamma$ will be specified later. Suppose $FTFL$-$ALG$ is an $\alpha$ approximation algorithm for the $FTFL$.
\begin{algorithm}
  \caption{$Algorithm~(X, S)$}
\begin{algorithmic}[1]
 \STATE Create FTFL instance $(F, C)$ where $F = S \cup R$ and $C = X$
 \STATE $f_i = \gamma \frac{|C|}{k}$ is cost of a facility $i \in S$ and $f_i = 0$ for $i \in R$
 \WHILE{$C \neq \emptyset$}
  \STATE $F' = FTFL$-$ALG(F, C)$
  \STATE Let $C'$ be a set of clients connected in a distance one with any $f \in F' \setminus R$
  \STATE Let $F = F \setminus (F' \cap S)$ and $C = C \setminus C'$
  \FORALL{$i \in S$}
    \STATE $f_i = \gamma \frac{|C|}{k}$
  \ENDFOR
 \ENDWHILE
\end{algorithmic} 
\end{algorithm}
 Suppose that in iteration $j$ we have $n_j = |C|$ clients not covered by any facility from set $S$. Moreover, the cost of facilities from set $S$ is $\gamma \frac{n_j}{k}$. 
 
 We know that there is a solution to $(X, S)$, which select exactly $k$ sets, such that each element is covered, hence there exists a solution for $(F, C)$ such that each client has an open facility from the set $S$ at distance one and $r-1$ open facilities from the set $R$ at distance 0. The cost of such solution is $n_j(\gamma + 1)$. The approximation ratio of $FTFL$-$ALG$ is $\alpha$, hence on the cost of the returned solution is at most $\alpha n_j(\gamma + 1)$. 
 
 Suppose that $FTFL$-$ALG$ opened $\beta_jk$ facilities from $S$ and all facilities from $R$. Fraction of all clients which are connected in distance one is $0 \leq c_j \leq 1$. The cost of such solution is $\beta_j \gamma n_j + c_jn_j + 2 (n_j - c_jn_j)$. Thus we obtain: $$\alpha \geq \frac{\gamma \cdot \beta_j + 2 - c_j}{1 + \gamma}$$ Define $c_{\beta_j} = (1 - e^{\frac{-\beta_j}{c}})$, where $0 < c < 1$. Suppose that for some iteration $j$, $c_j \leq c_{\beta_j}$ then $$\alpha \geq \frac{\gamma \cdot \beta_j + 1 + e^{\frac{-\beta_j}{c}}}{1 + \gamma}$$ Taking the derivative with respect to $\beta_j$ , tells us that the minimum is achieved at $\beta_j = c \cdot ln(\frac{1}{c\gamma})$. Substituting this value of $\beta_j$ gives us $$\alpha \geq \frac{1 + c\gamma}{1 + \gamma} + \frac{c\gamma}{1 + \gamma}ln(\frac{1}{c\gamma})$$ Choosing $\gamma = 0.278465$ gives $\alpha \geq 1.278465$ when c is a constant close to 1.
 
 In the second case we have that for each $j$, $c_j > c_{\beta_j}$. Similar to the proof in \cite{Guha} it allows us to have an algorithm for the Set Cover with the approximation ratio $c\cdot ln(|X|)$ for $c < 1$, which implies that $NP \subseteq DTIME[n^{O(log~log~n)}]$, see \cite{Feige}. It remains to observe that we may ''guess'' $k$ by trying the above construction for $k = 1, 2, \dots |S|$.
 \qed
\end{proof}

The main idea of the proof, which you can find in the Appendix, is the same as in \cite{Guha}. We use an algorithm for $FTFL$ to obtain partial covers for the Set Cover instance $(X, S)$. We show that the partial cover cannot be too big in each step, because then it would contradict the Feige’s result \cite{Feige}. He proved that approximation algorithm for the Set Cover with ratio $c \cdot ln |X|$ where $c < 1$, implies that $NP \subseteq DTIME[n^{O(log~log~n)}]$.

\begin{figure}
  \centering
  \includegraphics[height=60mm]{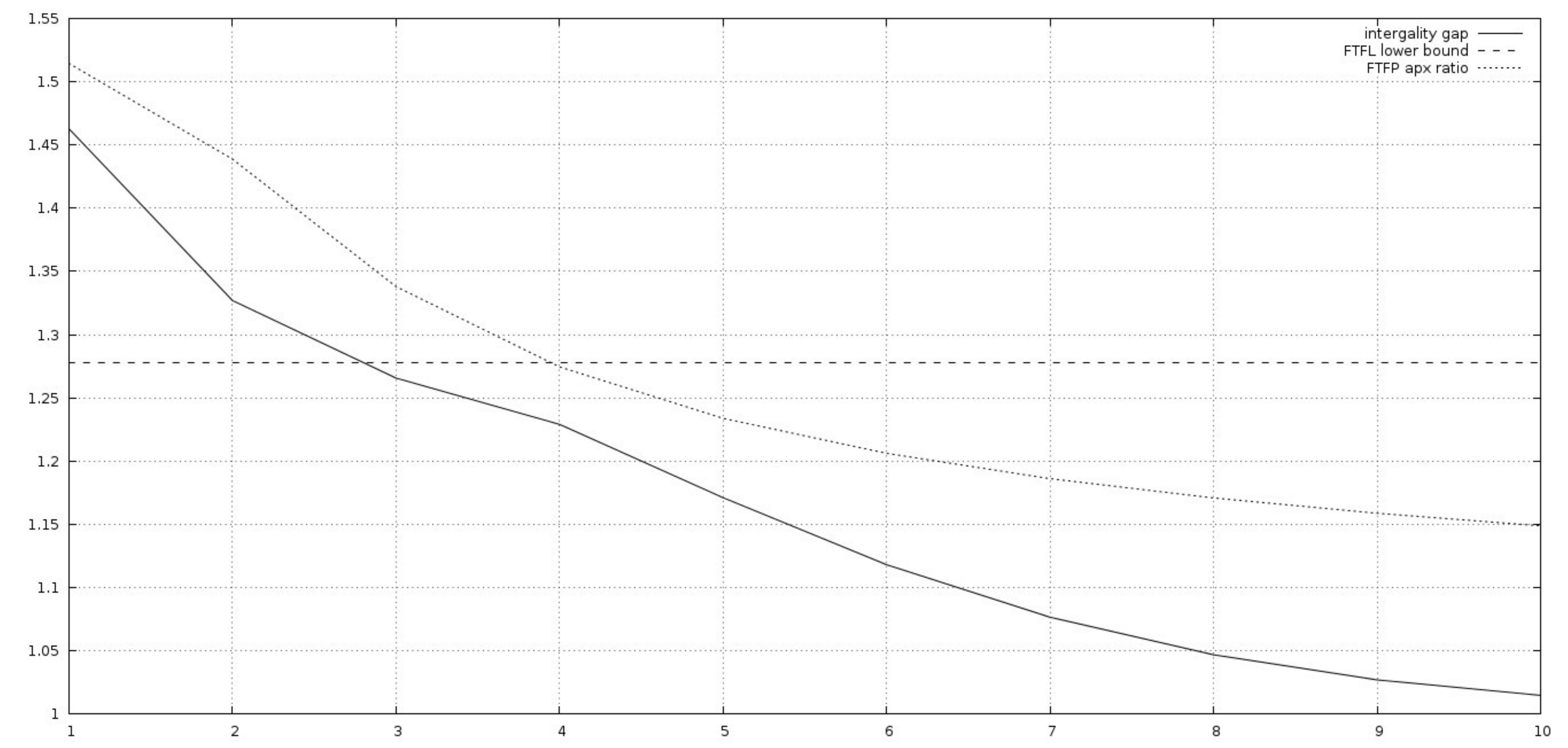}
 \caption{The figure shows three quantities as a function of $r = 1, 2, \dots 10$: the lower bound for $FTFL$, approximation ratio (in the general, non-uniform case) of our algorithm for $FTFP$, and a lower bound on the integrality gap of the LP (\ref{lp_ufl:goal}) - (\ref{lp_ufl:non_negative}). The integrality gap results are also true for the  standard $LP$ for $FTFL$, for details see Appendix~\ref{int_gap_section}.}
  \label{fig:apx_plot}
\end{figure}

\section{Open problems}
Is it possible to apply techniques similar to presented in this paper to $FTFL$? Is $FTFL$ getting any easier with increasing value of $r$? It would also be interesting to derive a lower bound on the approximability of $FTFP$ as a function of $r > 1$.

\end{document}